\newtheorem{definition}{Definition}
\newtheorem{lemma}[definition]{Lemma}
\newtheorem{theorem}[definition]{Theorem}
\newtheorem{corollary}[definition]{Corollary}
\newcommand{\qed}{\hfill$\square$}
\newenvironment{proof}{\noindent \textbf{{Proof~} }}{\qed}
\newenvironment{remark}{\noindent \textbf{{Remark~}}}{}
\mathchardef\ordinarycolon\mathcode`\:
\def\vcentcolon{\mathrel{\mathop\ordinarycolon}}
\newcommand{\nc}{\newcommand}
\nc{\rnc}{\renewcommand}
\nc{\lbar}[1]{\overline{#1}}
\nc{\bra}[1]{\langle#1|}
\nc{\ket}[1]{|#1\rangle}
\nc{\ketbra}[2]{|#1\rangle\!\langle#2|}
\nc{\braket}[2]{\langle#1|#2\rangle}
\nc{\proj}[1]{| #1\rangle\!\langle #1 |}
\nc{\avg}[1]{\langle#1\rangle}
\nc{\Rank}{\operatorname{Rank}}
\nc{\smfrac}[2]{\mbox{$\frac{#1}{#2}$}}
\nc{\tr}{\operatorname{Tr}}
\nc{\ox}{\otimes}
\nc{\dg}{\dagger}
\nc{\dn}{\downarrow}
\nc{\cA}{{\cal A}}
\nc{\cB}{{\cal B}}
\nc{\cC}{{\cal C}}
\nc{\cD}{{\cal D}}
\nc{\cE}{{\cal E}}
\nc{\cF}{{\cal F}}
\nc{\cG}{{\cal G}}
\nc{\cH}{{\cal H}}
\nc{\cI}{{\cal I}}
\nc{\cJ}{{\cal J}}
\nc{\cK}{{\cal K}}
\nc{\cL}{{\cal L}}
\nc{\cM}{{\cal M}}
\nc{\cN}{{\cal N}}
\nc{\cO}{{\cal O}}
\nc{\cP}{{\cal P}}
\nc{\cQ}{{\cal Q}}
\nc{\cR}{{\cal R}}
\nc{\cS}{{\cal S}}
\nc{\cT}{{\cal T}}
\nc{\cU}{{\cal U}}
\nc{\cV}{{\cal V}}
\nc{\cW}{{\cal W}}
\nc{\cX}{{\cal X}}
\nc{\cY}{{\cal Y}}
\nc{\cZ}{{\cal Z}}
\nc{\csupp}{{\operatorname{csupp}}}
\nc{\qsupp}{{\operatorname{qsupp}}}
\nc{\var}{{\operatorname{var}}}
\nc{\Var}{{\operatorname{Var}}}
\nc{\rar}{\rightarrow}
\nc{\lrar}{\longrightarrow}
\nc{\polylog}{{\operatorname{polylog}}}
\nc{\wt}{{\operatorname{wt}}}
\def\g{\gamma}
\def\d{\delta}
\def\e{\epsilon}
\def\r{\rho}
\def\s{\sigma}
\def\ph{\varphi}
\def\o{\omega}
\def\G{\Gamma}
\def\Ph{\Phi}
\def\O{\Omega}
\nc{\RR}{{{\mathbb R}}}
\nc{\CC}{{{\mathbb C}}}
\nc{\FF}{{{\mathbb F}}}
\nc{\NN}{{{\mathbb N}}}
\nc{\ZZ}{{{\mathbb Z}}}
\nc{\PP}{{{\mathbb P}}}
\nc{\QQ}{{{\mathbb Q}}}
\nc{\UU}{{{\mathbb U}}}
\nc{\EE}{{{\mathbb E}}}
\nc{\id}{{\operatorname{id}}}
\nc{\CHSH}{{\operatorname{CHSH}}}
\nc{\Hom}[2]{\mbox{Hom}(\CC^{#1},\CC^{#2})}
\nc{\rU}{\mbox{U}}
\nc{\ob}[1]{#1}
\nc{\SEP}{{\text{SEP}}}
\nc{\NS}{{\text{NS}}}
\nc{\LOCC}{{\text{LOCC}}}
\nc{\PPT}{{\text{PPT}}}
\nc{\EXT}{{\text{EXT}}}
\nc{\Sym}{{\operatorname{Sym}}}
\nc{\ERLO}{{E_{\text{r,LO}}}}
\nc{\ERLOCC}{{E_{\text{r,LOCC}}}}
\nc{\ERPPT}{{E_{\text{r,PPT}}}}
\nc{\ERLOCCinfty}{{E^{\infty}_{\text{r,LOCC}}}}
\nc{\Aram}{{\operatorname{\sf A}}}
\newcommand{\bunderline}[1]{\underline{#1\mkern-4mu}\mkern4mu }
\nc{\baro}{\overline{\o}}
\nc{\barO}{\overline{\O}}
\nc{\barg}{\overline{\g}}
\nc{\barG}{\overline{\G}}
\nc{\Mhat}{\widehat{M}}
\nc{\Xhat}{\widehat{X}}
\nc{\Phhat}{\widehat{\Ph}}
\nc{\uX}{\bunderline{X}}
\nc{\uA}{\bunderline{A}}
\nc{\uB}{\bunderline{B}}
\nc{\tdC}{\tilde{C}}
\nc{\tdQ}{\tilde{Q}}
\nc{\tdE}{\tilde{E}}
\newcommand{\strikered}[1]{\ifmmode\textcolor{red}{\cancel{#1}}\else\textcolor{red}{\sout{#1}}\fi}
\begin{document}

\title{Singleton Bounds for Entanglement-Assisted \protect\\ Classical and Quantum Error Correcting Codes\footnote{A short version of this work has been presented at ISIT 2022 and is included in its proceedings \cite{EACQ-Singleton-ISIT}.}}

\author{Manideep Mamindlapally\,\orcidlink{0000-0002-8157-3972}}
\email{manideepyx@iitkgp.ac.in}
\affiliation{IIT Kharagpur, Kharagpur, India - 721302}

\author{Andreas Winter\,\orcidlink{0000-0001-6344-4870}}
\email{andreas.winter@uab.cat}
\affiliation{ICREA \&{} Grup d'Informaci\'o Qu\`antica, Departament de F\'isica, Universitat Aut\`onoma de Barcelona, 08193 Bellaterra (BCN), Spain}
\affiliation{Institute for Advanced Study, Technische Universit\"at M\"unchen, Lichtenbergstra{\ss}e 2a, D-85748 Garching, Germany}

\begin{abstract}
We show that entirely quantum Shannon theoretic methods, based on von Neumann entropies and their properties, can be used to derive Singleton bounds on the performance of entanglement-assisted hybrid classical-quantum (EACQ) error correcting codes. 
Concretely, we show that the triple-rate region of qubits, cbits and ebits of possible EACQ codes over arbitrary alphabet sizes is contained in the quantum Shannon theoretic rate region of an associated memoryless erasure channel, which turns out to be a polytope. 
We show that a large part of this region is attainable by certain EACQ codes, whenever the local alphabet size (i.e.~Hilbert space dimension) is large enough, in keeping with known facts about classical and quantum maximum distance separable (MDS) codes: in particular, all of its extreme points and all but one of its extremal lines. The attainability of the remaining one extremal line segment is left as an open question.
\end{abstract}

\date{17 March 2023}

\maketitle

\section{Introduction}
Quantum error correcting codes (QECC) are subject to various universal constraints relating block length, alphabet size, minimum distance and code rate, very much like classical error correcting codes. In particular, the classical Singleton bound \cite{Singleton} has a satisfying quantum version for QECC \cite{q-Singleton,q-Singleton-Rains,KlappeneckerSarvepalli:subsystem-Singleton}, which has been extended to entanglement-assisted and catalytic QECC (EAQECC and CQECC) \cite{ea-q-Singleton,catalytic-q-Singleton}. Indeed, since their proposal, EAQECC have enjoyed considerable attention from coding theorists, both in the block coding and the convolutional coding setting \cite{Galindo-et-al,LaiBrun:imperfect,Nadkarni,Allahmadi-et-al,Chen:ea,LaiAshikhmin,WildeBrunBabar:EA-turbo}.
Note that the original EAQECC Singleton bound was found to be erroneous in general \cite{Grassl-counterex}, which was put right in the recent paper \cite{EAQECC-Singleton}.
Although these bounds all have different forms, they are united in that they express the ability of the code to correct erasure errors. Classically, the Singleton bound is attained for MDS codes, which exist for sufficiently large alphabet size. Similarly, attaining these quantum Singleton bounds defines suitable quantum MDS (QMDS) codes and their entanglement-assisted generalisations. 

The present paper grew out of an attempt to understand better the results of \cite{EAQECC-Singleton}, where the most complete quantum Singleton bound so far was derived for EAQECC, in the form of a two-dimensional convex region in the ebit-qubit plane, into which all possible codes necessarily fall (as a function of block length and minimum distance). Investigating the tightness of the bound exhibited two types of codes attaining the boundary of the allowed region, one a genuinely entanglement-assisted quantum code dubbed EAQ, the other a classical MDS code piggy-backed onto a simple teleportation protocol. This suggested that, to obtain a full understanding of the codes involved, one should extend the investigation to hybrid classical and quantum codes, assisted by entanglement (EACQ codes), which we provide here. As our main result, we prove Singleton bounds for such EACQ codes, in the form of a convex triple-tradeoff region in the ebit-cbit-qubit space (as a function of block length and minimum distance). 

Hybrid classical and quantum error correcting codes have been considered in several papers before, mostly without entanglement-assistance \cite{GrasslLuZeng:hybrid,NemecKlappi:hybrid,NemecKlappi:hybrid-paper,Cao-et-al:hybrid,NemecKlappi:hybrid-detecting}; the EACQ codes as considered by us have been introduced in \cite{KremskyHsiehBrun}, although we allow additionally catalysis (recycling) of the three basic resources, making our bounds more general. Classical and quantum hybrid error correcting codes have been generalised in \cite{beny2007generalization,beny2007PRA,Majidy:hybrid,NemecKlappi:subsystem-hybrid} to classical and quantum hybrid subsystem codes in an operator algebraic setting. We stress that our bounds apply to either kind of code.
Importantly for our approach, the triple tradeoff between ebits, cbits and qubits has been treated repeatedly in the Shannon-theoretic setting of a given channel (often i.i.d. on the block of $n$ physical systems) and small errors. A precursor was the breakthrough paper by Devetak and Shor \cite{devetak2005capacity}, which showed how to analyse the capacity region of joint classical and quantum information transmission over a given noisy channel. 
For us, the paper by Hsieh and Wilde \cite{HsiehWilde} is fundamental, which derives a multi-letter capacity formula for the triple tradeoff, of which we take the converse proof and develop it in several directions. We follow essentially the very developed, rigorous exposition of Wilde \cite[Ch.~25]{wildebeast}.

\medskip
{\bf Results.} 
We give here an overview of our main results, which also serves as a guide to the paper. In Section \ref{sec:problem} we review the definition of EACQ codes and pose the problem of characterising all triples of catalytic rates attainable for given block length and minimum distance, and then discuss preliminaries in Section \ref{sec:prelim}. 
After that: 
\begin{itemize}
    \item We state Hsieh-Wilde's converse theorem (\cite[Thm.~1]{HsiehWilde}) in Section \ref{sec:info-theory-converse} and give a complete proof from first principles, and generalised both to arbitrary (one-shot) channels and the catalytic setting, in the Appendix \ref{app:converse}, the latter having previously been accomplished in \cite[Ch.~25]{wildebeast}.

  \item We use this general converse to derive the triple-tradeoff rate region for the i.i.d.~erasure channel, correcting a gap in \cite{HsiehWilde}, where the erasure probability $\delta$ was assumed to be less than $\frac12$, in Section \ref{sec:triple-tradeoff-erasure}. (Wilde \cite[Ch.~25.5.3]{wildebeast} is much more complete, but proves additivity only for $\delta\leq\frac12$.)

  \item We then use the general one-shot converse for a channel that randomly erases $d-1$ of the $n$ physical systems and with error probability set to $0$, to derive the Singleton bound for EACQ codes, subsuming both classical Singleton bounds and all previously known quantum Singleton bounds; the obtained region is the same as for an i.i.d. erasure channel with erasure probability $\delta=\frac{d-1}{n}$, in Section \ref{sec:triple-tradeoff-Singleton}.

  \item We analyse the geometric shape of the EACQ Singleton region, determining its extreme points and extremal lines; we can then show that large parts of the boundary are indeed attained, whenever the alphabet size is large enough, in Section \ref{sec:attainability}. One line segment remains to be shown to be attainable, to prove our entire region to be optimal, which we leave as an open question, and which we discuss among other things in the concluding Section \ref{sec:conclusion}.
\end{itemize}

\section{Problem Setting}
\label{sec:problem}
Following \cite[Ch.~25]{wildebeast}, we begin by defining the task of hybrid classical and quantum communication via a noisy channel, i.e.~a linear completely positive and trace preserving (cptp) map $\cN:\cL(A)\rightarrow\cL(B)$, assisted by entanglement, in the one-shot setting and allowing for a certain (small) decoding error. Here, $A$ and $B$ are complex Hilbert spaces associated to the quantum systems of sender and receiver, which for convenience we will throughout assume to be of finite dimension; $\cL(A)$ is the space of all linear operators (matrices) on $A$, and likewise for $B$.  

\begin{figure}[ht]
    \centering
    \includegraphics[width=\textwidth]{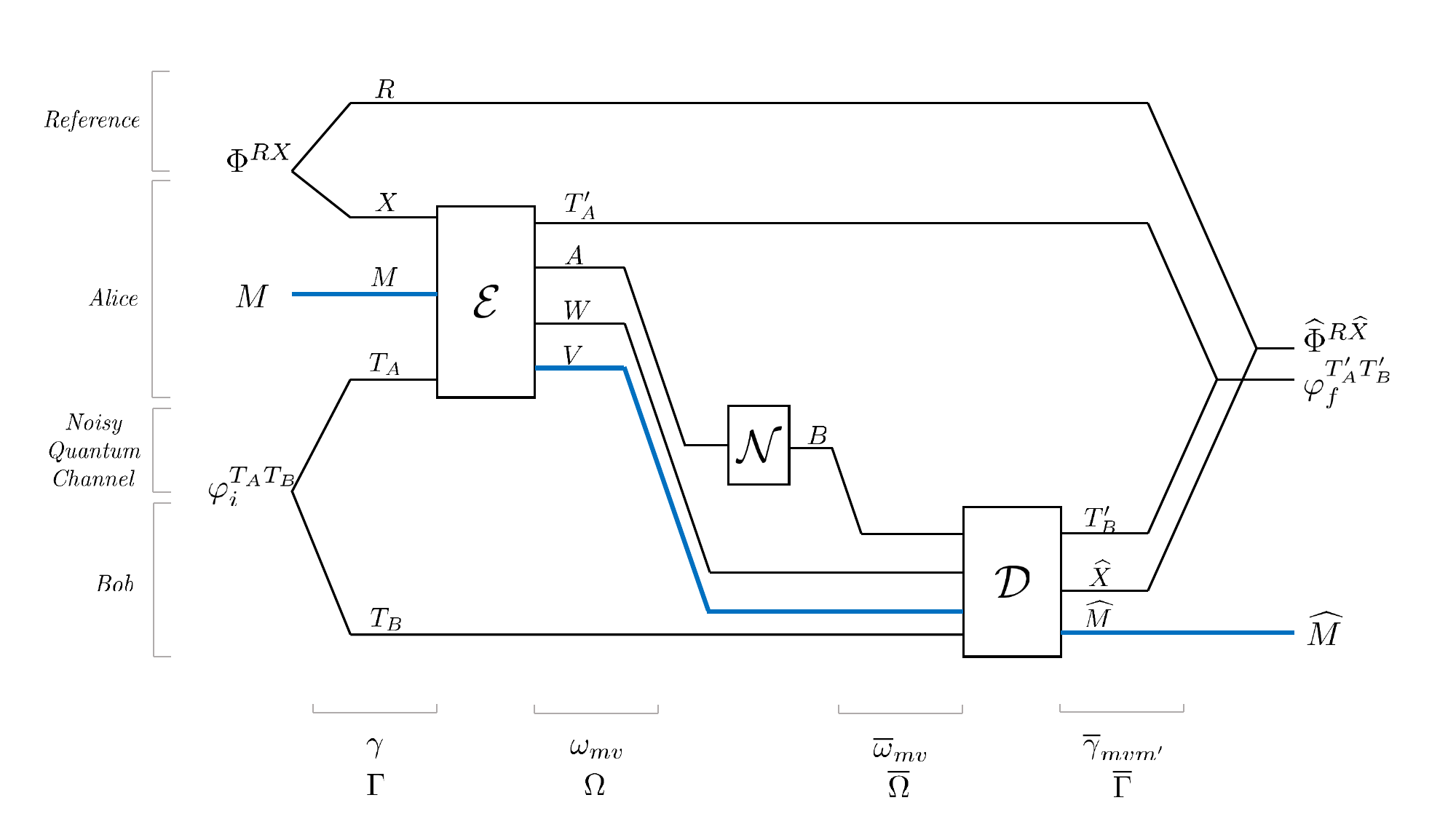}
    \caption{Communication diagramme of the most general catalytic entanglement-assisted classical and quantum code. We discuss four snapshots of the involved systems at different times, at which we wish to analyse their joint state: at starting time (initial state $\Gamma$), after the encoding ($\Omega$), after the action of the noisy channel ($\overline{\Omega}$), and after the decoding ($\overline{\Gamma}$). 
    The starting quantum systems $X$ and $R$ constitute $Q_2$ qubits each, the message $M$ has $C_2$ cbits, and the entangled systems $T_A$ and $T_B$ carry an equivalent of $E_1$ pure entangled ebits.
    The intermediate system $W$ carries $Q_1$ qubits and $V$ carries $C_1$ cbits. The input system $A$ is passed through a noisy channel $\cN$ to produce output system $B$.
    On decoding, the output quantum system $\hat{X}$ contains $Q_2$ qubits, the output classical message $\hat{M}$ has $C_2$ cbits. Additionally, a system $T_B'$ is generated that has $E_2$ ebits of entanglement with Alice's system $T_A'$.}
    \label{fig:problem_setup}
\end{figure}

We consider a general entanglement-assisted classical and quantum communication setup as in Fig.~\ref{fig:problem_setup}. We have communicating agents, a sender \emph{Alice} and a receiver \emph{Bob}. Apart from the channel $\cN$, they start with (a) a shared entangled quantum state $\ph_i^{T_A T_B}$ of $E_1=S(\ph_i^{T_A})$ \emph{ebits}, (b) a quantum channel capable of transmitting $Q_1$ \emph{qubits}, (c) a classical channel capable of transmitting $C_1$ \emph{cbits}. 
Using encoding and decoding operations, they wish to achieve (a) transfer of $Q_2$ \emph{qubits} of quantum information, (b) transfer of $C_2$ \emph{cbits} of classical information, and (c) regeneration of $E_2=S(\ph_f^{T_A'})$ \emph{ebits} of shared entanglement in the form of a shared state $\ph_f^{T_A'T_B'}$. This results in a net transfer of $Q = Q_2 -Q_1$ \emph{qubits}, $C = C_2 - C_1$ \emph{cbits} on using up of $E = E_1 - E_2$ \emph{ebits}. The entire diagram, defined by the maps $\cD$ and $\cE$ is called an \emph{entanglement-assisted classical and quantum error correcting code (EACQ code) of error $\epsilon$ for the channel $\cN$}. We speak simply of an EACQ code if the error is implied to be $\epsilon=0$. Furthermore, for an $n$-partite input system $A^n=A_1\ldots A_n$ composed of $A_j$ that are all $q$-dimensional Hilbert spaces, we say that an EACQ code has minimum distance $d$ if it has error $\epsilon=0$ for the following \emph{block erasure channel}, which uniformly randomly erases $d-1$ out of $n$ of the system $A_i$:
\begin{equation}
  \label{eq:block-erasure-channel-def}
  \cE_{q,d-1,n}(\rho) = \frac{1}{{n\choose d-1}} \sum_{J\subset[n],\atop |J|=d-1} \left(\tr_{A_J}\rho\right)^{B_{[n]\setminus J}} \ox \proj{\perp}^{B_J},
\end{equation}
where $B_j=A_j\oplus\CC\ket{\perp}$ and $\ket{\perp}^{B_J} = \bigotimes_{j\in J} \ket{\perp}^{B_j}$. 
We adopt the taxonomy of \cite{KremskyHsiehBrun} for an EACQ code of block length $n$ and minimum distance $d$: it is denoted $[\![n,k:c,d;e]\!]_q$, where $k=\frac{Q}{\log q}$, $c=\frac{C}{\log q}$ and $e=\frac{E}{\log q}$. 

\medskip
The objective now is to chararacterise the permissible values of $C$, $Q$ and $E$ for a given block length $n$ such that a code with small or vanishing error exists. To formalize this, we analyse the states describing all involved systems at four different times and then define an associated error term. In the beginning, the initial states are prepared:
\begin{align}
    \g^{RXT_AT_B} &= \Ph^{RX} \ox \ph_i^{T_AT_B}, \notag\\
    \G^{MRXT_AT_B} &= \sum_{m\in\cM} \frac{1}{|\cM|} \ket{m}\bra{m}^{M} \ox \g^{RXT_AT_B}. \\
    \intertext{On encoding using a collection of CPTP maps $\left( \cE_{v|m} ^{ XT_A \rightarrow T_A' A W } \right)_{v \in \cV}$ on classical input $m\in\cM$, we have:}
    p_{V|M}(v|m) &= \tr \cE_{v|m} ^{ XT_A \rightarrow T_A' A W } \left(\g^{RXT_AT_B}\right), \notag\\
    \o_{mv}^{RT_A'AWT_B} &= \frac{1}{p_{V|M}(v|m)} \cE_{v|m} ^{ XT_A \rightarrow T_A' A W } \left(\g^{RXT_AT_B}\right), \notag\\
    \O^{MVRT_A'AWT_B} &= \sum_{ \substack{m \in \cM \\ v \in \cV}} \frac{1}{|\cM|} p_{V|M}(v|m) \ket{mv}\bra{mv}^{MV} \ox \o_{mv}^{RT_A'AWT_B}. \label{def:O}\\
    \intertext{After the action of the channel:}
    \baro_{mv}^{RT_A'BWT_B} &= \cN^{A\rightarrow B} \left( \o_{mv}^{RT_A'AWT_B} \right), \notag\\
    \barO^{MVRT_A'BWT_B} &= \sum_{\substack{m \in \cM \\ v \in \cV}} \frac{1}{|\cM|} p_{V|M}(v|m) \ket{mv}\bra{mv}^{MV} \ox \baro_{mv}^{RT_A'BWT_B}. \label{def:barO}  %
    \intertext{And finally, after decoding using a collection of CPTP maps $\left( \cD_{m'|v} ^{ BWT_B\rightarrow \Xhat T_B' }\right)_{m'\in\cM}$ on classical input $v \in \cV$ to the decoder:}
    p_{\Mhat|MV} (m'|mv) &= \tr \cD_{m'|v} ^{ BWT_B\rightarrow \Xhat T_B' } \left(\baro_{mv}^{RT_A'BWT_B}\right), \notag\\
    \barg_{mvm'}^{R\Xhat T_A'T_B'}&= \frac{1}{p_{\Mhat|MV} (m'|mv)}\cD_{m'|v} ^{ BWT_B\rightarrow \Xhat T_B' } \left(\baro_{mv}^{RT_A'BWT_B}\right), \notag\\
    \barG^{MV\Mhat R \Xhat T_A' T_B' } &= \sum_{ \substack{m \in \cM \\ v \in \cV \\ m' \in \cM}} \frac{1}{|\cM|} p_{V|M}(v|m) p_{\Mhat|MV}(m'|mv) \ket{mvm'}\bra{mvm'}^{MV\Mhat} \ox \barg_{mvm'}^{R\Xhat T_A'T_B'}.
\end{align}

The ideal output state pertaining to the quantum information is the maximally entangled state $\Phi^{R\Xhat} = \id^{X \rightarrow \Xhat} \left( \Phi^{RX} \right)$, the ideal final entanglement state is arbitrary, but has to be a pure state $\ph_f^{T_A'T_B'}$, and the ideal state pertaining to the classical information is the perfectly correlated classical state $\overline{\Phi}^{M\widehat{M}} = \frac{1}{|\cM|}\sum_m \proj{m}^M \ox \proj{m}^{\widehat{M}}$. The ideal output is the product state
\begin{align}
    \barG_{\text{ideal}}^{M \Mhat R \Xhat T_A' T_B'} &= \overline{\Phi}^{M\widehat{M}} \ox \Phi^{R\Xhat} \ox \ph_f^{T_A'T_B'}. \label{eq:def:barGideal}
\end{align}
We say that the protocol (the code) has error $\epsilon$, if
\begin{equation}
  \frac12 \left\| \barG^{M \Mhat R \Xhat T_A' T_B'} - \barG_{\text{ideal}}^{M \Mhat R \Xhat T_A' T_B'} \right\|_1 \leq \epsilon, \label{eq:distancebound}
\end{equation}
where $\|X\|_1 = \tr |X| = \tr\sqrt{X^\dagger X}$ denotes the trace norm (aka Schatten-$1$-norm).
Note that this implies, by the contractivity of the trace norm under partial trace and more generally cptp maps, that
\begin{align*}
  \Pr\left\{M\neq \widehat{M}\right\} 
    = \frac12 \left\|\barG^{M \Mhat}-\overline{\Phi}^{M\widehat{M}}\right\|_1 
    &\leq \epsilon, \\
  \frac12 \left\| \barG^{R\Xhat} - \Phi^{R\Xhat} \right\|_1 
    &\leq \epsilon, \\
  \frac12 \left\| \barG^{T_A'T_B'} - \ph_f^{T_A' T_B'} \right\|_1 
    &\leq \epsilon. 
\end{align*}

As discussed at the start of this section, the other important parameters of the code are the initial entanglement $E_1=S\left(\ph_i^{T_A}\right)$ (where $S$ denotes the von Neumann entropy), the invested qubit transmission $Q_1=\log|W|$ (with $|W|$ denoting the dimension of the Hilbert space $W$, for both see the next section), and the invested cbit transmission $C_1=\log|V|$; furthermore the final generated entanglement $E_2=S\left(\ph_f^{T_A'}\right)$, the effected qubit transmission $Q_2=\log|X|$, and the effected cbit transmission $C_2=\log|M|$.
In the sequel we will derive bounds on the net (also called \emph{amortized}) rates $E=E_1-E_2$, $Q=Q_2-Q_1$ and $C=C_2-C_1$. 
Note the different treatment of the net entanglement rate, in two ways: first, it is defined in terms of entropies rather than a logarithmic dimension, so as not to restrict to maximal entanglement as initial or final state (but if $\varphi_i$ or $\varphi_f$ is maximally entangled, then $E_1=\log|T_A|$, $E_2=\log|T_A'|$, respectively); secondly, by prior convention in the treatment of the present problem, $E$ is the net rate of entanglement \emph{consumption}, whereas $C$ and $Q$ are net rates of resource \emph{production}.

\section{Preliminaries}
\label{sec:prelim}
Here we collect definitions and facts known from prior work that will be used in the later sections. As we have already done in the previous section introducing our problem setting, quantum systems are denoted by capital letters, which we use also, without danger of confusion, to denote the underlying complex Hilbert space. As a rule, our Hilbert spaces, $A$, $B$, etc, are finite-dimensional throughout the paper, the dimension of $A$, i.e.~the cardinality of any basis of $A$, being denoted $|A|$. We use the same notation $|\cM|$ for the cardinality of a finite set $\cM$, justified by the information-theoretic parallelism that while an alphabet of size $n$ can encode $\log n$ classical bits, a Hilbert space of dimension $d$ can encode $\log d$ qubits. The logarithm $\log$ is by default the binary logarithm, unless explicitly specified. 

With that, the von Neumann entropy of a state $\rho$ on a system $A$ is defined as $S(A)_\rho = S(\rho) = -\tr \rho\log\rho$, and the conditional entropy of a state $\rho$ on a biparite system $AB$ as $S(A|B)_\rho = S(AB)_\rho-S(B)_\rho$, which also equals the negative coherent information $I(A\rangle B)_\rho := -S(A|B)_\rho$. The quantum entropy is subject to a number of fundamental relations, principal among them strong subadditivity $I(A:C|B)_\rho = S(AB)_\rho + S(BC)_\rho - S(B)_\rho - S(ABC)_\rho$ for any tripartite state \cite{LiebRuskai:SSA}, and the equivalent weak monotonicity property $S(A|B)_\rho + S(A|C)_\rho \geq 0$, both for an arbitrary tripartite state $\rho$ on $ABC$. Furthermore, the following two uniform continuity bounds.

\begin{lemma}[Fannes inequality \cite{Fannes,Audenaert,AW:S-continuity}]
\label{lemma:Fannes}
For any two states $\rho$ and $\sigma$ on a system $A$ with $\frac12\| \r^A - \s^A \|_1 \leq \epsilon \leq 1$, 
\begin{equation*} 
     \left| S(A)_\r - S(A)_\s \right| \leq \e \log |A| + H_2(\e),
\end{equation*}
where $H_2(\epsilon)=-\e\log\e-(1-\e)\log(1-\e)$ is the binary entropy. 
\hfill\qed
\end{lemma}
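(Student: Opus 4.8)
The plan is to reduce the inequality to the elementary statement about Shannon entropies of probability vectors and then to dispatch that classical statement by an information-theoretic coupling argument. Write $d=|A|$ and $T=\frac12\|\r^A-\s^A\|_1\le\e\le 1$. First I would diagonalise and sort: let $r_1\ge\cdots\ge r_d$ and $s_1\ge\cdots\ge s_d$ be the eigenvalues of $\r$ and $\s$ in nonincreasing order, fix one orthonormal basis $\{\ket{i}\}$ of $A$, and put $\r'=\sum_i r_i\proj{i}$, $\s'=\sum_i s_i\proj{i}$. Then $S(\r')=S(\r)$ and $S(\s')=S(\s)$, while by the Lidskii--Mirsky eigenvalue perturbation bound (the sorted spectra satisfy $\lambda^\downarrow(\r)-\lambda^\downarrow(\s)\prec\lambda(\r-\s)$, so their $\ell_1$-distance is at most $\|\r-\s\|_1$) one has $\|\r'-\s'\|_1=\sum_i|r_i-s_i|\le\|\r-\s\|_1=2T$. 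Hence it suffices to prove the bound for the commuting pair $\r',\s'$, i.e.\ for two probability vectors $p=(r_i)$, $q=(s_i)$ on $[d]:=\{1,\dots,d\}$ with total variation distance $T':=\frac12\|p-q\|_1\le T\le\e$, in the form $|H(p)-H(q)|\le\e\log d+H_2(\e)$.

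For the classical estimate I would use a maximal coupling. Take a pair of random variables $(X,Y)$ with $X\sim p$, $Y\sim q$ and $\Pr\{X\ne Y\}=T'$; such a coupling exists, since one can draw a common value from the overlap distribution $i\mapsto\min(p_i,q_i)$ (total mass $1-T'$) and, with the remaining probability $T'$, draw $X$ and $Y$ independently from the disjointly supported normalised residuals $(\max(p_i-q_i,0))_i$ and $(\max(q_i-p_i,0))_i$. Let $Z$ be the indicator of the event $\{X\ne Y\}$, so $H(Z)=H_2(T')$. The chain rule together with $H(X)\le H(XZ)$ and $H(Y)\ge H(Y\mid Z)$ (monotonicity and conditioning) gives
\[
  H(X)\le H_2(T')+T'\,H(X\mid X\ne Y)+(1-T')\,H(X\mid X=Y),\qquad H(Y)\ge T'\,H(Y\mid X\ne Y)+(1-T')\,H(Y\mid X=Y).
\]
Since $X$ and $Y$ have the same conditional law given $\{X=Y\}$, subtracting cancels the last terms, and $H(X\mid X\ne Y)\le\log d$ yields $H(p)-H(q)\le H_2(T')+T'\log d$; exchanging $p$ and $q$ gives the two-sided bound.

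Finally I would upgrade $T'$ to $\e$. The map $t\mapsto H_2(t)+t\log d$ has derivative $\log\frac{d(1-t)}{t}$, hence is nondecreasing on $[0,\frac{d}{d+1}]$, so for $\e\le\frac{d}{d+1}$ the claim follows from $T'\le\e$; and for $\e>\frac{d}{d+1}$ one instead uses $|S(\r)-S(\s)|\le\log d$ together with $H_2(\e)\ge(1-\e)\log\frac1{1-\e}>(1-\e)\log d$, which forces $\e\log d+H_2(\e)>\log d$. The step I expect to be the real obstacle is the very first one: unlike in the classical setting there is no operator analogue of a maximal coupling of $\r$ and $\s$ (the ``operator minimum'' need not be positive), so one genuinely has to pass through the sorted eigenvalue vectors, and the legitimacy of that passage rests on the Lidskii--Mirsky bound, the one non-elementary ingredient. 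I note in passing that estimating $S(\r)-S(\s)=\sum_i\bigl(\eta(r_i)-\eta(s_i)\bigr)$, with $\eta(x):=-x\log x$, termwise and optimising it under the constraint $\sum_i|r_i-s_i|\le 2T$ in the style of Audenaert would sharpen $\log|A|$ to $\log(|A|-1)$, which is presumably why that reference is cited.
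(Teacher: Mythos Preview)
The paper does not actually prove this lemma; it is stated with citations to \cite{Fannes,Audenaert,AW:S-continuity} and marked with a \qed, so there is no argument in the paper to compare against. Your proof is correct and self-contained: the reduction to commuting operators via the Lidskii--Mirsky majorisation $\lambda^\downarrow(\rho)-\lambda^\downarrow(\sigma)\prec\lambda(\rho-\sigma)$ is the standard move, and your coupling argument for the classical bound is clean and gives exactly $H_2(T')+T'\log d$.

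One cosmetic point in your Step~3: at the endpoint $\e=1$ your strict inequality $H_2(\e)>(1-\e)\log d$ degenerates to $0>0$, so the chain ``$\e\log d+H_2(\e)>\log d$'' fails there. This is harmless, since at $\e=1$ the right-hand side equals $\log d$ and the trivial bound $|S(\rho)-S(\sigma)|\le\log d$ already gives the (non-strict) conclusion; just replace the strict inequalities by $\ge$ in that final sentence, or treat $\e=1$ separately. Everything else is sound.
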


\begin{lemma}[Alicki-Fannes inequality \cite{AlickiFannes,AW:S-continuity}]
\label{lemma:AF}
For any two states $\rho$ and $\sigma$ on a composite system $AB$ with $\frac12\| \r^{AB} - \s^{AB} \|_1 \leq \epsilon \leq 1$, 
\begin{equation*}
  \left| S(A|B)_\r - S(A|B)_\s \right| 
    = \left| I(A\rangle B)_\r - I(A\rangle B)_\s \right| 
    \leq 2\e \log |A| + g(\e),
\end{equation*}
where $g(x) = (1+x)H_2\left(\frac{x}{1+x}\right) = (x+1)\log(x+1)- x\log x$. 
Note that $g$ is a monotonically increasing, concave function with $g(x)\geq H_2(x)$.
\hfill\qed 
\end{lemma}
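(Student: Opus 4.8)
\medskip
\noindent\textbf{Proof sketch.~}
The plan is to reduce this continuity bound to an almost-linearity statement for the conditional entropy under binary mixtures, by means of a classically flagged extension; this is the route that yields the sharp remainder $g$, rather than the (dimension-of-$B$ dependent, hence much weaker) bound obtained by applying the ordinary Fannes inequality (Lemma~\ref{lemma:Fannes}) termwise to $S(A|B)=S(AB)-S(B)$.

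First I would note that it suffices to prove the inequality with $\e$ replaced by $\e':=\frac12\|\r^{AB}-\s^{AB}\|_1\le\e$, since $x\mapsto 2x\log|A|+g(x)$ is monotonically increasing on $[0,1]$; the case $\e'=0$ is trivial. Take the Jordan decomposition $\r-\s=\D_+-\D_-$ into orthogonal positive parts; as $\tr\r=\tr\s$, one has $\tr\D_+=\tr\D_-=\e'$. Set $\r_1:=\D_-/\e'$ and $\s_1:=\D_+/\e'$, both states on $AB$, and record the identity $\r+\e'\r_1=\s+\e'\s_1=:(1+\e')\,\o$, which exhibits a single state $\o$ with two binary-mixture decompositions $\o=(1-p)\r+p\r_1=(1-p)\s+p\s_1$, where $p:=\e'/(1+\e')\in[0,\frac12)$.

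Next I would introduce a classical flag bit $X$ and the classical--quantum state $\T^{XAB}:=(1-p)\proj{0}^X\ox\r^{AB}+p\proj{1}^X\ox\r_1^{AB}$, whose $AB$-marginal is $\o^{AB}$. Two standard facts then sandwich $S(A|B)_\o$: (i)~$S(A|BX)_\T=(1-p)S(A|B)_\r+pS(A|B)_{\r_1}$, the additivity of conditional entropy over a classical register; and (ii)~$0\le S(A|B)_\o-S(A|BX)_\T=I(A\!:\!X|B)_\T\le H_2(p)$, where the left inequality is strong subadditivity and the right one uses that $X$ is classical, so $S(X|AB)_\T\ge0$ and hence $I(A\!:\!X|B)_\T=S(X|B)_\T-S(X|AB)_\T\le S(X)_\T=H_2(p)$. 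The same sandwich holds with $\s,\s_1$ in place of $\r,\r_1$. Comparing the lower bound from the $\r$-side with the upper bound from the $\s$-side gives $(1-p)\bigl(S(A|B)_\r-S(A|B)_\s\bigr)\le p\bigl(S(A|B)_{\s_1}-S(A|B)_{\r_1}\bigr)+H_2(p)\le 2p\log|A|+H_2(p)$, using $-\log|A|\le S(A|B)\le\log|A|$ for every state. Dividing by $1-p=1/(1+\e')$ and using $(1+\e')\cdot 2p=2\e'$ and $(1+\e')H_2\bigl(\tfrac{\e'}{1+\e'}\bigr)=g(\e')$ yields $S(A|B)_\r-S(A|B)_\s\le 2\e'\log|A|+g(\e')$; since the construction is symmetric in $\r\leftrightarrow\s$, the reverse difference satisfies the same bound, and monotonicity upgrades $\e'$ to $\e$. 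The form in terms of $I(A\rangle B)$ is immediate from $I(A\rangle B)=-S(A|B)$.

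The only non-routine ingredient — and the reason the remainder is $g$ rather than a term of the form $\mathrm{const}\cdot\log 2$ — is step~(ii): adjoining a single \emph{classical} flag bit costs at most $H_2(p)$ in conditional entropy, which rests precisely on $S(X|AB)\ge0$ for classical $X$. Everything else is routine bookkeeping with the Jordan decomposition and the crude bound $|S(A|B)|\le\log|A|$; the one point to keep track of is that $p<\frac12$, so $H_2$ and $g$ are evaluated on the interval where they are increasing.
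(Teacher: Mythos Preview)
Your proof sketch is correct and is precisely the argument of \cite{AW:S-continuity}, which is one of the two references the paper cites for this lemma. Note, however, that the paper does not actually give a proof of Lemma~\ref{lemma:AF}: it is stated as a known result from the literature and marked with a bare \qed. So there is no ``paper's own proof'' to compare against; your sketch simply supplies what the paper chose to omit, following exactly the route of the cited sharp version (Jordan decomposition, common mixture $\omega$, classical flag bit, and the bound $I(A\!:\!X|B)\le H_2(p)$ from $S(X|AB)\ge 0$).
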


\begin{lemma}[Cf.~{\protect\cite[Lemmas ~2~and~3]{EAQECC-Singleton}}]
  \label{lemma:average-entropies}
  Consider the $(n+1)$-party system $A^nZ = A_1A_2\ldots A_nZ$, and for a subset $I \subset [n]$ of the ground set denote $A_I = \bigotimes_{i\in I} A_i$. Let $1\leq\mu\leq m\leq n$. Then, with respect to any state $\rho$ on $A^nZ$,
  \begin{align}
    \label{eq:average-plain}
    \overline{s}_m := 
    \frac1m \EE_{I\subset[n],\atop |I|=m} S(A_I)_\rho
      &\leq \frac1\mu \EE_{J\subset[n],\atop |J|=\mu} S(A_J)_\rho
      =: \overline{s}_\mu, \\
    \label{eq:average-conditional}
    \frac1m \EE_{I\subset[n],\atop |I|=m} S(A_I|Z)_\rho
      &\leq \frac1\mu \EE_{J\subset[n],\atop |J|=\mu} S(A_J|Z)_\rho.
  \end{align}
  where the expectation values in both bounds are with respect to uniformly random subsets $I,J \subset [n]$ of the ground set, of cardinality $|I|=m$, $|J|=\mu$, respectively.
  \hfill\qed
\end{lemma}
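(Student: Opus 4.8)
The plan is to reduce both inequalities to a single combinatorial averaging argument applied to subadditivity of von Neumann entropy. First I would prove \eqref{eq:average-plain}. The key inequality is the following: for any subset $I\subset[n]$ with $|I|=m$ and any $\mu\le m$, averaging subadditivity $S(A_I)\le \sum$ over a partition-type cover gives, after the right counting,
\[
  \mu\, S(A_I)_\rho \;\le\; \sum_{J\subset I,\ |J|=\mu} \frac{1}{\binom{m-1}{\mu-1}} S(A_J)_\rho ,
\]
because each element $i\in I$ lies in exactly $\binom{m-1}{\mu-1}$ of the $\binom{m}{\mu}$ subsets $J\subset I$ of size $\mu$, so $\sum_{J\subset I,|J|=\mu} S(A_J)_\rho \le \binom{m-1}{\mu-1}\sum_{i\in I} S(A_i)_\rho$ is \emph{not} quite what we want; instead one uses subadditivity directly as $S(A_I)_\rho \le \frac{1}{\binom{m-1}{\mu-1}}\sum_{J\subset I,|J|=\mu} S(A_J)_\rho$, which follows by summing the $\binom{m}{\mu}$ subadditivity inequalities $S(A_I)_\rho \le S(A_J)_\rho + S(A_{I\setminus J})_\rho$ is again not the cleanest route. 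The clean route: iterate the elementary bound $S(A_I)_\rho \le S(A_{I\setminus\{i\}})_\rho + S(A_i)_\rho$ is too weak; the correct tool is the ``subset averaging'' identity that for fixed $I$, $\frac1m S(A_I)_\rho \le \frac1\mu\, \EE_{J\subset I,|J|=\mu} S(A_J)_\rho$, which is itself proved by induction on $m-\mu$ using strong subadditivity / subadditivity in the form $S(A_I)_\rho + S(A_{I\setminus\{i,j\}})_\rho \le S(A_{I\setminus\{i\}})_\rho + S(A_{I\setminus\{j\}})_\rho$ together with symmetrization over $i,j$.

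Concretely, the step I would actually carry out is: fix $m$ and prove $\overline s_m \le \overline s_{m-1}$, and then chain these inequalities down from $m$ to $\mu$. To get $\overline s_m\le \overline s_{m-1}$, write $m\,\overline s_m = \EE_{|I|=m} S(A_I)_\rho$ and expand each $S(A_I)_\rho$ using the averaged subadditivity over the $m$ ways of removing one element: by symmetry $S(A_I)_\rho \le \frac1m\sum_{i\in I}\big(S(A_{I\setminus\{i\}})_\rho + S(A_i)_\rho\big)$ is false in general, so instead I use the Han-type inequality $ (m-1) S(A_I)_\rho \le \sum_{i\in I} S(A_{I\setminus\{i\}})_\rho$, which \emph{is} a standard consequence of strong subadditivity (subadditivity of the entropy of complements). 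Taking $\EE_{|I|=m}$ of both sides and noting that each $(m-1)$-set appears as $I\setminus\{i\}$ for exactly $n-m+1$ pairs $(I,i)$ with $|I|=m$, one gets $(m-1)\binom{n}{m}\,\overline s_m^{\,(\text{sum})} \le (n-m+1)\binom{n}{m-1}\,\overline s_{m-1}^{\,(\text{sum})}$, and converting sums to averages and dividing by $m$ and $m-1$ respectively yields exactly $\overline s_m\le \overline s_{m-1}$. Chaining from $m$ down to $\mu$ gives \eqref{eq:average-plain}.

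For \eqref{eq:average-conditional} I would run the identical argument with the conditional entropy $S(A_I|Z)_\rho$ in place of $S(A_I)_\rho$: the only input needed is the conditional Han inequality $(m-1) S(A_I|Z)_\rho \le \sum_{i\in I} S(A_{I\setminus\{i\}}|Z)_\rho$, which is again a direct consequence of strong subadditivity applied to the state on $A^nZ$ (conditioning on $Z$ preserves the chain-rule manipulation verbatim). The combinatorial double-counting step is unchanged, so the same telescoping from $m$ to $\mu$ applies. I would remark that this is exactly the statement appearing as \cite[Lemmas 2 and 3]{EAQECC-Singleton}, so one may alternatively just cite it; the proof above is included for completeness.

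The main obstacle is getting the combinatorial bookkeeping exactly right — in particular verifying the multiplicities ($n-m+1$ for the pair count, $\binom{m-1}{\mu-1}$ if one does the one-shot version) and making sure the normalisations by $m$ and $\mu$ land so that the telescoped inequality points in the stated direction; strong subadditivity does all the genuine ``quantum'' work and is entirely routine here.
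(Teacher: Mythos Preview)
Your approach is correct, but note that the paper itself does not prove this lemma at all: the statement ends with a bare $\qed$ and simply refers to \cite[Lemmas~2 and~3]{EAQECC-Singleton}. So there is no ``paper's own proof'' to compare against; you are supplying what the authors chose to omit.

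On the substance: the route you settle on --- the quantum Han inequality $(m-1)\,S(A_I)\le \sum_{i\in I} S(A_{I\setminus\{i\}})$ (and its $Z$-conditioned analogue), averaged over $|I|=m$ and then telescoped from $m$ down to $\mu$ --- is exactly the standard argument and is sound. Your justification of the Han step via strong subadditivity (writing $S(A_I)=\sum_k S(A_{i_k}\mid A_{i_1}\cdots A_{i_{k-1}})$ for any ordering and using $S(A_{i_k}\mid A_{I\setminus\{i_k\}})\le S(A_{i_k}\mid A_{i_1}\cdots A_{i_{k-1}})$) is correct, as is the conditional version with the extra $Z$ carried along. The combinatorial counting is right: each $(m{-}1)$-set arises as $I\setminus\{i\}$ for $n-m+1$ pairs, and after normalising one indeed gets $(m-1)\,\EE_{|I|=m}S(A_I)\le m\,\EE_{|J|=m-1}S(A_J)$, i.e.\ $\overline{s}_m\le\overline{s}_{m-1}$.

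Two presentational remarks. First, your write-up contains several abandoned starts (``is not quite what we want'', ``is again not the cleanest route'', ``is too weak'') before arriving at the Han inequality; in a final version these should be deleted and you should go straight to the working argument. Second, the notation $\overline{s}_m^{\,(\text{sum})}$ is never defined; just write the displayed inequality as $(m-1)\sum_{|I|=m}S(A_I)\le (n-m+1)\sum_{|J|=m-1}S(A_J)$ and then divide through, which avoids any ambiguity.
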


The following Lemma comes from \cite{EAQECC-Singleton}, but its proof is buried inside the proof of Theorem 6 of that paper, and uses a different notation, so we reproduce it here in full. 

\begin{lemma}[Cf.~{\protect\cite[Proof~of~Theorem~6]{EAQECC-Singleton}}]
  \label{lemma:crazy-inequality}
  Consider an $n$-party system $A^n$ of $n$ $q$-dimensional Hilbert spaces $A_1A_2\ldots A_n$. For a subset $I \subset [n]$ of the ground set denote $A_I = \bigotimes_{i\in I} A_i$. Then, with respect to any state $\rho$ on $A^n$, and any $d>\frac{n}{2}+1$, 
  \begin{equation}
  \label{eq:crazy-inequality}
    n\overline{s}_n = S(A^n) 
     \leq \EE_{|J|=n-d+1}S(A_J) + (d-1)t = (n-d+1)\overline{s}_{n-d+1} + (d-1)t,
  \end{equation}
  where $t$ is defined by the relation
  \[ 
    (n-2d+2)t 
      := (n-d+1)\overline{s}_{n-d+1}-(d-1)\overline{s}_{d-1} 
      = \EE_{|J|=n-d+1}S(A_J) - \EE_{|I|=d-1}S(A_I).
  \]
  It satisfies $0\leq t\leq \overline{s}_{d-1} \leq \log q$.
\end{lemma}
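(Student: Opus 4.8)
The plan is to exploit Lemma~\ref{lemma:average-entropies} together with subadditivity of the von Neumann entropy. First I would establish the defining relation for $t$ makes sense: since $d>\frac n2+1$ we have $n-2d+2<0$, so $t$ is well-defined as a ratio, and the sign will need care. Rewriting the definition, $t = \frac{(n-d+1)\overline s_{n-d+1}-(d-1)\overline s_{d-1}}{n-2d+2}$; multiplying through, we get the two equivalent linear identities $(n-d+1)\overline s_{n-d+1} = (d-1)\overline s_{d-1} + (n-2d+2)t$ and, adding $(d-1)t$ to both sides, $(n-d+1)\overline s_{n-d+1} + (d-1)t = (d-1)\overline s_{d-1} + (n-d+1)t = (d-1)(\overline s_{d-1}+t) + \cdots$; the upshot is that the claimed chain $n\overline s_n \le (n-d+1)\overline s_{n-d+1}+(d-1)t$ is, after substituting the definition of $t$, equivalent to a weighted comparison between $\overline s_n$, $\overline s_{n-d+1}$ and $\overline s_{d-1}$. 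So the real content is an inequality purely among averaged subset-entropies at three scales $n$, $n-d+1$, $d-1$, with $d-1 < n-d+1 < n$ (the strict inequalities using $d>\frac n2+1$, hence $n-d+1 \le d-2 < d-1$... wait, one must check: $d > \frac n2 + 1 \Rightarrow n - d + 1 < d - 1$, so in fact $n-d+1 < d-1 \le n$).

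Next, for the inequality $S(A^n) \le \EE_{|J|=n-d+1}S(A_J) + (d-1)t$: I would cover the ground set $[n]$ by the complements of $(d-1)$-subsets, or more directly use subadditivity in the form $S(A^n) \le S(A_J) + S(A_{[n]\setminus J})$ for every $J$ with $|J|=n-d+1$, so $|[n]\setminus J| = d-1$. Averaging over all such $J$ gives $S(A^n) \le \EE_{|J|=n-d+1}S(A_J) + \EE_{|I|=d-1}S(A_I) = (n-d+1)\overline s_{n-d+1} + (d-1)\overline s_{d-1}$. Now I compare $(d-1)\overline s_{d-1}$ with $(d-1)t$: from the definition, $(d-1)t \ge (d-1)\overline s_{d-1}$ is equivalent to $(n-d+1)\overline s_{n-d+1} - (d-1)\overline s_{d-1} \le (n-2d+2)\overline s_{d-1}$, i.e. (dividing by the negative quantity $n-2d+2$, which flips the inequality) to $\frac{1}{n-d+1}\cdot(n-d+1)\overline s_{n-d+1}\ge \overline s_{d-1}$ rearranged — cleaner: it is equivalent to $(n-d+1)\overline s_{n-d+1}\ge (n-d+1)\overline s_{d-1}$, i.e. $\overline s_{n-d+1}\ge \overline s_{d-1}$, which is FALSE in general — $\overline s$ is nonincreasing in the index by Lemma~\ref{lemma:average-entropies}, so $\overline s_{n-d+1}\le \overline s_{d-1}$. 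Hence in fact $(d-1)t \le (d-1)\overline s_{d-1}$, and this goes the \emph{wrong} way for the crude bound above. So the naive subadditivity estimate is too lossy, and I expect the actual argument must instead apply subadditivity more cleverly — splitting $A^n = A_I A_{[n]\setminus I}$ with $|I| = d-1$ but then further bounding $S(A_{[n]\setminus I})$, $|[n]\setminus I| = n-d+1$, by $\overline s_{n-d+1}$-type averages, and bounding $S(A_I)$ not by $\overline s_{d-1}$ but by a telescoping of the form $S(A_I) \le S(A_{I'}) + (\text{increments bounded by }t)$ for a nested chain of subsets, where $t$ is exactly the per-system "marginal entropy" in the regime between scales $d-1$ and $n-d+1$. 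This telescoping/chaining over a maximal chain of subsets, controlling each single-system increment by the averaged quantity $t$, is the heart of the proof and the main obstacle; the overlapping-window structure ($n-2d+2$ being the relevant "gap") is what produces the coefficient $(n-2d+2)t$ in the definition.

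Finally, for the bounds $0 \le t \le \overline s_{d-1} \le \log q$: the last inequality $\overline s_{d-1}\le\log q$ is immediate since $S(A_I)\le\log|A_I| = (d-1)\log q$ for each $(d-1)$-subset $I$ (all $A_i$ being $q$-dimensional). For $t\le \overline s_{d-1}$: using Lemma~\ref{lemma:average-entropies} we have $\overline s_{n-d+1}\le\overline s_{d-1}$ (as $n-d+1 \ge d-1$... again checking the direction with $d>\frac n2+1$: here $n-d+1 < d-1$, so actually $\overline s_{n-d+1}\ge\overline s_{d-1}$), so from $(n-2d+2)t = (n-d+1)\overline s_{n-d+1}-(d-1)\overline s_{d-1}$ with $n-2d+2<0$ I solve $t = \overline s_{d-1} + \frac{(n-d+1)(\overline s_{n-d+1}-\overline s_{d-1})}{n-2d+2}$ — wait, better: $t = \frac{(n-d+1)\overline s_{n-d+1} - (d-1)\overline s_{d-1}}{n-2d+2}$, and since $(n-d+1)-(d-1) = n-2d+2$, write $t - \overline s_{d-1} = \frac{(n-d+1)(\overline s_{n-d+1}-\overline s_{d-1})}{n-2d+2} \le 0$ because numerator and denominator have opposite signs (numerator $\ge 0$ since $\overline s_{n-d+1}\ge\overline s_{d-1}$ when $n-d+1\le d-1$; denominator $<0$). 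Hence $t \le \overline s_{d-1}$. For $t\ge 0$: this should follow from $S(A^n)\ge 0$ combined with the main inequality, or more directly from a subadditivity-type lower bound $\EE_{|J|=n-d+1}S(A_J)\ge$ something; I would deduce it at the end once the chaining argument is in place, since $t$ there arises as an average of nonnegative conditional-entropy-like increments — though I should double-check whether $t\ge0$ needs an extra argument (e.g. monotonicity giving $\overline s_{n-d+1}\ge \frac{d-1}{n-d+1}\overline s_{d-1}$ type bound) or falls out for free. I expect $t\ge 0$ to be the one remaining delicate point among the easy bounds.
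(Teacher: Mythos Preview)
Your handling of $t\le\overline s_{d-1}$ is correct and matches the paper. The rest has two genuine gaps.

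\textbf{The main inequality.} You correctly see that plain subadditivity $S(A^n)\le(n-d+1)\overline s_{n-d+1}+(d-1)\overline s_{d-1}$ is too weak, since $t\le\overline s_{d-1}$. But the ``telescoping over a maximal chain, controlling single-system increments by $t$'' idea is not what actually works, and as stated it is too vague to count as a plan. The paper's move is to write $S(A^n)-S(A_J)=S(A_I|A_J)$ for $I=[n]\setminus J$, $|I|=d-1$, and then apply the \emph{conditional} version of Lemma~\ref{lemma:average-entropies} (Eq.~\eqref{eq:average-conditional}) with conditioning system $Z=A_J$, comparing the scale $|I|=d-1$ to the smaller scale $|K|=2d-2-n$ inside $I$:
\[
  S(A_I|A_J)\;\le\;\frac{d-1}{2d-2-n}\,\EE_{K\subset I,\;|K|=2d-2-n} S(A_K|A_J).
\]
Averaging over $J$ (so that $I'=K\cup J$ is a uniformly random $(d-1)$-subset) the right-hand side becomes exactly $\frac{d-1}{2d-2-n}\bigl((d-1)\overline s_{d-1}-(n-d+1)\overline s_{n-d+1}\bigr)=(d-1)t$. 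So the missing ingredient is not a chain decomposition but a single application of the conditional monotonicity at the intermediate scale $2d-2-n$, which your proposal never introduces.

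\textbf{Nonnegativity of $t$.} This does not fall out for free, and it is not a consequence of the main inequality or of Lemma~\ref{lemma:average-entropies} alone. Since $n-2d+2<0$, one needs $(n-d+1)\overline s_{n-d+1}\le(d-1)\overline s_{d-1}$, i.e.\ $\EE_{|J|=n-d+1}S(A_J)\le\EE_{|I|=d-1}S(A_I)$, which is strictly stronger than the per-system monotonicity $\overline s_{d-1}\le\overline s_{n-d+1}$. The paper proves it via \emph{weak monotonicity} (equivalently strong subadditivity): with coupled random subsets $J'\subset I$, $K=I\setminus J'$ (so $|K|=2d-2-n$) and $I'=K\cup J$, one has $2\bigl(\EE S(A_I)-\EE S(A_J)\bigr)=\EE\bigl(S(A_K|A_{J'})+S(A_K|A_J)\bigr)\ge 0$ because $J'$ and $J$ are disjoint. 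Your suggestions (deducing it from $S(A^n)\ge 0$, or from a bound of the form $\overline s_{n-d+1}\ge\frac{d-1}{n-d+1}\overline s_{d-1}$) do not lead anywhere; the genuinely quantum input of SSA is required here.
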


\begin{proof}
We start with the bounds on $t$. To prove its non-negativity, observing $n-2d+2<0$, we have to show $\EE_{|J|=n-d+1}S(A_J) \leq \EE_{|I|=d-1}S(A_I)$. To this end, consider uniformly random subsets $I$, $J$ and $J'$ such that $|I|=d-1$, $|J|=|J'|=n-d+1$, jointly distributed in such a way that $I\stackrel{.}{\cup}J=[n]$ and $J'\subset I$. Let $K=:I\setminus J'$ and $I':=K\cup J$, which allows us to note that $K$ and $I'$ are uniformly random subsets such that $|K|=2d-2-n$ and $|I'|=d-1$. Now, 
\[\begin{split}
  2\left(\EE_{|I|=d-1}S(A_I) - \EE_{|J|=n-d+1}S(A_J)\right)
    &= \left(\EE_{|I|=d-1}S(A_I) - \EE_{|J'|=n-d+1}S(A_{J'})\right) \\
    &\phantom{==}
       + \left(\EE_{|I'|=d-1}S(A_{I'}) - \EE_{|J|=n-d+1}S(A_J)\right) \\
    &= \EE \left( S(A_K|A_{J'}) + S(A_K|A_J) \right) 
     \geq 0,
\end{split}\]
where the first line holds because $I$ and $I'$ have the same distribution, and likewise $J$ and $J'$; the second line is by definition of the conditional von Neumann entropy; the final inequality follows from strong subadditivity (in the form of weak monotonicity) \cite{LiebRuskai:SSA} for each term of the expectation.

To prove $t\leq \log q$, we start from Lemma~\ref{lemma:average-entropies}, Equation (\ref{eq:average-plain}), which tells us $\overline{s}_{d-1}\leq \overline{s}_{n-d+1}$, or equivalently $(n-d+1)\overline{s}_{d-1}\leq (n-d+1)\overline{s}_{n-d+1}$. Thus, subtracting $(d-1)\overline{s}_{d-1}$ from both sides, by definition of $t$ we get $(n-2d+2)\overline{s}_{d-1}\leq (n-2d+2)t$, yielding the desired $t\leq \overline{s}_{d-1}\leq\log q$. 

\medskip
Finally, we address the inequality (\ref{eq:crazy-inequality}): consider a fixed $J\subset[n]$ with $|J|=n-d+1$, and $I:=[n]\setminus J$ with $|I|=d-1$. With a uniformly random $K\subset I$ with $|K|=2d-2-n$, Lemma \ref{lemma:average-entropies}, Equation~(\ref{eq:average-conditional}) tells us
\[
  S(A^n)-S(A_J) = S(A_I|A_J) 
                \leq \frac{d-1}{2d-2-n}\EE_{|K|=2d-2-n} S(A_K|A_J). 
\]
Taking the average over $J$ (and hence $I$) as well, and noting that $I':=K\cup J$ is a uniformly random subset with $|I'|=d-1$, we arrive at 
\[\begin{split}
  S(A^n) 
    &\leq \EE_{|J|=n-d+1} S(A_J) + \frac{d-1}{2d-2-n} \EE_{|J|=n-d+1\atop |K|=2d-2-n} S(A_K|A_J) \\
    &= \EE_{|J|=n-d+1} S(A_J) + \frac{d-1}{2d-2-n} \left( \EE_{|I'|=d-1} S(A_{I'})- \EE_{|J|=n-d+1} S(A_J)\right) \\
    &= (n-d+1)\overline{s}_{n-d+1} + (d-1)t,
\end{split}\]
the last line by definition.
\end{proof}

\section{Interlude: information theoretic converse}
\label{sec:info-theory-converse}
Here we re-derive the converse part of \cite[Thm.~1]{HsiehWilde}, for the amortized (net) rates $E$, $Q$ and $C$, and using somewhat more standard arguments compared to the proof in the cited paper, as indeed shown in \cite[Ch.~25]{wildebeast}.

\begin{theorem}[Capacity region one-shot converse bound]
\label{thm:converse}
For an EACQ error correcting code with error $\epsilon$ that uses $E_1$ \emph{ebits}, $Q_1$ \emph{qubits}, $C_1$ \emph{cbits} to generate $E_2$ \emph{ebits} and transmits $Q_2$ \emph{qubits}, $C_2$ \emph{cbits} over a quantum channel $\cN$, there exists a quantum state 
\begin{equation}
  \label{eq:sigma}
  \s^{UAB} = \sum_u p(u) \proj{u}^U \ox \cN^{A'\rightarrow B} \left(\ph_{u}^{AA'}\right),
\end{equation}
where the $p(u)\geq 0$ are probabilities and the $\ph_u^{AA'}$ are pure states with $|A|=|A'|$, such that 
\begin{align}
  \label{eq:converse:C+2Q}
  C + 2Q    &\leq I(UA:B)_\s + 2\epsilon (C_2+Q_2) + g(\epsilon), \\
  \label{eq:converse:Q-E}
  Q - E     &\leq I(A\rangle BU)_\s + 2\epsilon(Q_2+\log|T_A'|) + g(\epsilon), \\
  \label{eq:converse:C+Q-E}
  C + Q - E &\leq I(U:B)_\s + I(A\rangle BU)_\s + 2\epsilon(C_2+Q_2+\log|T_A'|) + 2g(\epsilon),
\end{align} 
holds for the net communication resource productions $Q = Q_2 - Q_1$ and $C = C_2 - C_1$, and net entanglement consumption $E = E_1 - E_2$.
\end{theorem}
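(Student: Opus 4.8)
The plan is to run the standard amortized-resource converse argument, tracking von Neumann entropies across the four snapshots $\G,\O,\barO,\barG$ defined in Section~\ref{sec:problem}. First I would fix the identification of the relevant systems: the auxiliary $U$ carries the classical message register $M$ together with whatever classical side information is produced in the encoding (the register $V$ of invested cbits will be absorbed here, as in \cite[Ch.~25]{wildebeast}), the purifying reference $R$ together with the system $X$ (resp.\ $\Xhat$) carries the quantum information to be protected, and the entanglement registers $T_A,T_B$ (resp.\ $T_A',T_B'$) account for the ebit terms. The channel input/output systems $A',B$ in \eqref{eq:sigma} are the physical systems fed into $\cN$; the states $\ph_u^{AA'}$ are the (purified, message-conditioned) encoder outputs restricted to the channel input and its purification. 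The first task is therefore bookkeeping: write $\s^{UAB}$ as the post-channel state $\barO$ with the appropriate systems relabeled and the rest traced out, and check that it has the claimed cq form.

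Next I would establish the three ``ideal'' identities that hold for the target state $\barG_{\text{ideal}}$ of \eqref{eq:def:barGideal}: because it is a product of a perfectly correlated classical state on $M\Mhat$, a maximally entangled state on $R\Xhat$, and a pure state $\ph_f^{T_A'T_B'}$, one reads off, for instance, $I(M:\Mhat)_{\barG_{\text{ideal}}} = 2C_2$ (in bits, using $\log|\cM|=C_2$), $I(R\rangle \Xhat T_B')_{\barG_{\text{ideal}}} = -S(R|\Xhat)=\log|X|=Q_2$ together with the entanglement contributions $\pm E_2$, and so on. These are the ``if the code were perfect'' values of the information quantities that will appear. Then, using the error bound \eqref{eq:distancebound} together with the Alicki--Fannes inequality (Lemma~\ref{lemma:AF}) and Fannes' inequality (Lemma~\ref{lemma:Fannes}), I would replace $\barG_{\text{ideal}}$ by the actual output $\barG$ at the cost of the $2\epsilon(\cdots)+g(\epsilon)$ correction terms — the dimension factors multiplying $\epsilon$ are exactly $C_2+Q_2$, $Q_2+\log|T_A'|$, etc., matching the statement, because those are the logarithms of the systems on which the conditional entropies live.

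The heart of the proof is a chain of entropy (in)equalities connecting these output quantities back to the channel-output state $\s$. The key structural facts are: data processing (monotonicity of mutual information / coherent information under the decoding cptp map $\cD$) to pass from $\barG$ to $\barO$; the fact that encoding is an isometry from $X T_A$ (plus the classical controls) into $T_A' A W$, so entropies of the relevant complementary systems are preserved from $\G$ to $\O$; and the identification of the invested resources $Q_1=\log|W|$, $C_1=\log|V|$, $E_1=S(\ph_i^{T_A})$ as dimension/entropy bounds that get subtracted to turn $Q_2,C_2,E_2$ into the net rates $Q,C,E$ — e.g.\ a term $\log|W|$ enters via $S(W)\le\log|W|=Q_1$ wherever $W$ appears on the ``produced'' side. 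Carrying this out for each of the three inequalities: \eqref{eq:converse:C+2Q} comes from bounding $I(M\widehat{M}:R\Xhat)$-type quantity and using $I(M:R)=0$ at the input, combined with subadditivity to split off $2Q$ and $C$; \eqref{eq:converse:Q-E} is the coherent-information (quantum capacity) bound, obtained by tracking $I(R\rangle \cdot)$ through the protocol and using that $T_AT_B$ starts uncorrelated with $RX$; and \eqref{eq:converse:C+Q-E} is the combined bound, obtained by adding a classical-information chain to the quantum one and invoking strong subadditivity to get the sum $I(U:B)_\s + I(A\rangle BU)_\s$ on the right.

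I expect the main obstacle to be exactly the combined inequality \eqref{eq:converse:C+Q-E}: getting the classical and quantum parts to share the same auxiliary $U$ and the same channel-output $B$ without double-counting requires a careful application of strong subadditivity (equivalently weak monotonicity) at the right point in the chain, and making sure the continuity corrections from the three separate appeals to Alicki--Fannes combine into just $2g(\epsilon)$ rather than something larger. A secondary technical nuisance is the correct treatment of the classical register $V$ of invested cbits — whether to fold it into $U$ or to treat $\log|V|=C_1$ as a separate subtracted term — and making the two choices consistent across all three bounds. Everything else (isometric invariance of encoding, data processing under decoding, the product structure of $\barG_{\text{ideal}}$) is routine once the register identifications are pinned down.
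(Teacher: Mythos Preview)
Your proposal is correct and follows essentially the same route as the paper's proof: identify $U\equiv MV$ and $A\equiv RT_A'WT_B$ in $\barO$, start from the exact information values on $\barG_{\text{ideal}}$, apply Lemma~\ref{lemma:AF} to pass to $\barG$, use data processing to reach $\barO$, and peel off $C_1,Q_1,E_1$ via dimension/entropy bounds on $V,W,T_B$. Two small bookkeeping corrections: $I(M:\Mhat)_{\barG_{\text{ideal}}}=C_2$ (not $2C_2$; the factor of two lives in $I(R:\Xhat)=2Q_2$), and for \eqref{eq:converse:C+2Q} the starting quantity is grouped as $I(MR:\Mhat\Xhat)$, not $I(M\Mhat:R\Xhat)$.
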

The full proof of the theorem is reproduced in Appendix~\ref{app:converse}.

\medskip
The terms containing the error in Theorem \ref{thm:converse}, Eqs.~(\ref{eq:converse:C+2Q}), (\ref{eq:converse:Q-E}) and (\ref{eq:converse:C+Q-E}), vanish in the limit $\epsilon\rightarrow 0$, and indeed for $\epsilon=0$, so we introduce a notation for the information theoretic (weak converse) rate region:

\begin{equation}\begin{split}
  \label{eq:rate-region-general}
  \cR^{(1)}(\cN) 
    &:= \Bigl\{ (C,Q,E)\in\RR^3 \text{ s.t. }\exists\; \sigma^{UAB} \text{ as in Equation~(\ref{eq:sigma}) with } C + 2Q \leq I(UA:B)_\s, \Bigr.\\
    &\phantom{========}
     \Bigl. Q - E  \leq I(A\rangle BU)_\s,\,
            C + Q - E \leq I(U:B)_\s + I(A\rangle BU)_\s \Bigr\}.
\end{split}\end{equation}

\medskip
\begin{remark}
We stress that the region described in Theorem \ref{thm:converse} and in Equation~(\ref{eq:rate-region-general}) is only a converse bound, inasmuch it is not necessarily tight. We call it one-shot, because it does not require any product or other structure of the channel. 
In fact, Wakakuwa, Nakata and collaborators \cite{Wakakuwa-1,Wakakuwa-2,Wakakuwa-3} have derived one-shot achievability and converse bounds using the more familiar smooth min-entropies, which in general have to be considered tighter. We do not use them because of the difficulty of evaluating the min-entropy expressions in general and in particular in the case of erasure channels that is of interest to us here. As a side note, the achievability bounds of \cite{Wakakuwa-1,Wakakuwa-2,Wakakuwa-3} always carry positive error terms, due to the random coding technique, which make them unsuitable for the present objective of error correcting codes (i.e.~zero error).
On the other hand, by Hsieh and Wilde \cite{HsiehWilde,wildebeast} the regularisation of the region $\cR^{(1)}(\cN)$ is asymptotically achieved for product channels. And, as we shall see below, for certain channels with appropriate structure, such as the block erasure channel, which latter is indeed permutation covariant. 
\end{remark}

\section{The triple-tradeoff capacity region for the erasure channel}
\label{sec:triple-tradeoff-erasure}
Now we specialise Theorem \ref{thm:converse} to the case of an i.i.d.~tensor power of an erasure channel, single-letterising the bound in the process. This example had been discussed in \cite{HsiehWilde}, but using an ad-hoc argument rather than reduction to the general converse bound. In \cite[Thm.~25.5.3]{wildebeast}, Wilde does just that, but his technique still only gives the asymptotic capacity region for erasure probability $\delta\leq\frac12$. 
Here we redo the argument, simplifying the previous ones, and extending the result to arbitrary erasure probabilities $\delta\in[0,1]$ by reducing the analysis to block erasure channels whose capacity region we prove in Section \ref{sec:triple-tradeoff-Singleton}. To account for the asymptotic behaviour of the information quantities $C$, $Q$ and $E$, we introduce associated rates by letting $C=n\tdC$, $Q=n\tdQ$ and $E=n\tdE$.
The main challenge is to get a single-letterised capacity region in terms of $\tdC$, $\tdQ$ and $\tdE$. 

\begin{theorem}[Capacity region for i.i.d.~erasure channel]
\label{thm:iid-erasure}
For an i.i.d.~erasure channel $\cE_{q,\delta}^{\ox n}$ with probability of erasure $\d$, in the limit of $n\rightarrow\infty$ and $\epsilon\rightarrow 0$, the system of converse inequalities from Theorem \ref{thm:converse} for an EACQ code of net cbit rate $\tdC$, net qubit rate $n\tdQ$ and net ebit rate $\tdE$, respectively, reduces to the region of triples $(\tdC, \tdQ, \tdE)$ such that there exists a $t \in [0,\log q]$ with
\begin{align}
  \label{eq:erasure:C+2Q}
  \tdC + 2\tdQ &\leq (1-\d) (\log q + t), \\
  \label{eq:erasure:Q-E}
  \tdQ - \tdE  &\leq (1-2\d)t,            \\
  \label{eq:erasure:C+Q-E}
  \tdC+\tdQ-\tdE  &\leq (1-\d) \log q - \d t.
\end{align}
\end{theorem}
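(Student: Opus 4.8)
The goal is to evaluate the one-shot converse region $\cR^{(1)}(\cN)$ from Eq.~\eqref{eq:rate-region-general} for $\cN = \cE_{q,\delta}^{\ox n}$, divide by $n$, and take $n\to\infty$. So I would start from Theorem \ref{thm:converse}: for any EACQ code there is a state $\s^{UAB}$ of the form \eqref{eq:sigma}, now with $A' = A'^n$ the $n$-fold input and $B = B^n$ the $n$-fold output of the erasure channel, satisfying the three inequalities \eqref{eq:converse:C+2Q}--\eqref{eq:converse:C+Q-E}. The error terms vanish in the limit, so what remains is to upper-bound $I(UA:B)_\s$, $I(A\rangle BU)_\s$ and $I(U:B)_\s + I(A\rangle BU)_\s$ by the right-hand sides of \eqref{eq:erasure:C+2Q}--\eqref{eq:erasure:C+Q-E} (after dividing by $n$), for a suitable single-letter parameter $t$; and then, for the achievability half, exhibit codes (or rather, asymptotic protocols) realising every point of the claimed region.

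\textbf{The converse.} For the erasure channel it is standard (and done in \cite{HsiehWilde,wildebeast}) to use the ``which-branch'' flag: the output $B_i$ carries a classical bit recording whether erasure happened, so conditioning on the erasure pattern $Z\subset[n]$ we may write $B = B_{\bar Z} Z$ where on the unerased set $\bar Z = [n]\setminus Z$ the output equals the input. I would condition all three information quantities on $Z$ (each conditioning on a classical flag is an equality up to the entropy of $Z$, which is $o(n)$ after normalisation, or can be folded into $U$). Then $I(UA:B)_\s = \EE_Z I(UA:A_{\bar Z})_\s + O(\log n)$, $I(A\rangle BU)_\s = \EE_Z I(A\rangle A_{\bar Z} U)_\s + O(\log n)$, etc. Now the key combinatorial input is exactly Lemma \ref{lemma:average-entropies} and Lemma \ref{lemma:crazy-inequality}: averaging the single-system entropies $S(A_i)$ over uniformly random erasure patterns produces the averaged quantities $\overline s_m$, and the quantity $t$ of Theorem \ref{thm:iid-erasure} should be identified with (the limiting value of) the per-letter parameter appearing in Lemma \ref{lemma:crazy-inequality} — morally $t = \overline s_{n-|Z|}/(\text{something})$, the ``conditional entropy density'' of the code state. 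Concretely I would: (i) bound $I(UA:A_{\bar Z})_\s \le S(A_{\bar Z})_\s + S(A_{\bar Z}|UA)_\s$ and use that $S(A_{\bar Z}) \le |\bar Z|\log q = n(1-\delta)\log q$ and that $-S(A_{\bar Z}|UA)$ averaged is controlled by $t$; (ii) for the coherent-information terms, expand $I(A\rangle A_{\bar Z}U) = -S(A|A_{\bar Z}U) = -S(AA_{\bar Z}U) + S(A_{\bar Z}U)$ and again average over $Z$, invoking Lemma \ref{lemma:average-entropies} Eq.~\eqref{eq:average-conditional} to replace the $(1-\delta)n$-sized averages by $\delta n$- or full-set averages, which is what turns $(1-\delta)$ into $(1-2\delta)$ and $-\delta$ coefficients; (iii) check the constraint $0\le t\le\log q$ comes from the corresponding bound in Lemma \ref{lemma:crazy-inequality} (non-negativity from weak monotonicity, the upper bound from $\overline s_\cdot \le \log q$). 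The case $\delta > \tfrac12$, where \cite{wildebeast} stalls, is precisely where Lemma \ref{lemma:crazy-inequality} (with its hypothesis $d>\tfrac n2+1$, i.e.~$\delta>\tfrac12$ after the substitution $\delta=\frac{d-1}{n}$) supplies the missing inequality, so I expect the whole $\delta$-range to be covered uniformly.

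\textbf{Achievability.} For the reverse inclusion I would show that every extreme point and edge of the polytope \eqref{eq:erasure:C+2Q}--\eqref{eq:erasure:C+Q-E} (as $t$ ranges over $[0,\log q]$) is asymptotically attainable by combining three known protocols over the erasure channel — entanglement-assisted quantum communication (achieving $Q-E$ and $C+2Q$ optimally), classical communication, and super-dense-coding/teleportation-type resource conversions — and then filling in by time-sharing and resource trade-offs (e.g.~using generated entanglement to boost classical rate, etc.), exactly as in the coding theorem of \cite{HsiehWilde}. Since the converse region is a polytope, matching it suffices to verify the corner points and one-parameter families.

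\textbf{Main obstacle.} The delicate point is the converse bookkeeping for $\delta>\tfrac12$: here $I(A\rangle BU)$ can be positive and the naive single-letterisation used in \cite{wildebeast} for $\delta\le\tfrac12$ (based on subadditivity/concavity going the convenient way) fails, and one must instead feed the erasure-pattern averages through Lemma \ref{lemma:crazy-inequality} to get the cross-over term $t$ with the correct coefficients $(1-2\delta)$ and $-\delta$. Getting the three inequalities to close \emph{simultaneously} with a single common $t$ — rather than three unrelated parameters — is the crux, and it hinges on recognising that the $t$ of Theorem \ref{thm:iid-erasure} and the $t$ of Lemma \ref{lemma:crazy-inequality} are the same object evaluated on the code state.
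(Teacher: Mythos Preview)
Your strategy is sound and uses the same core ingredients as the paper, but the paper organises the argument differently and this is worth knowing. The paper first evaluates the \emph{single-letter} region $\cR^{(1)}(\cE_{q,\delta})$ by a direct two-line computation: for one use of the erasure channel the quantities in Theorem~\ref{thm:converse} simplify to $I(AU:B)_\s=(1-\delta)\bigl(S(A')_{\s_0}+S(A'|U)_{\s_0}\bigr)$, $I(A\rangle BU)_\s=(1-2\delta)S(A'|U)_{\s_0}$ and $I(U:B)_\s+I(A\rangle BU)_\s=(1-\delta)S(A')_{\s_0}-\delta S(A'|U)_{\s_0}$; setting $t:=S(A'|U)_{\s_0}\in[0,\log q]$ gives the claimed region immediately, with no subset combinatorics at all. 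The hard part is \emph{additivity}, $\cR^{(1)}(\cE_{q,\delta}^{\ox n})=n\,\cR^{(1)}(\cE_{q,\delta})$, and here the paper does not manipulate the multi-letter entropies directly as you propose. Instead it observes that, by Hoeffding, $\cE_{q,\delta}^{\ox n}$ is within diamond distance $e^{-n(\delta-\lambda)^2}$ of a \emph{degraded version} of the block erasure channel $\cE_{q,w,n}$ with $w=\lfloor\lambda n\rfloor$, for any $\lambda<\delta$. Hence any i.i.d.~code is also a block-erasure code with asymptotically the same error, and the converse follows from Theorem~\ref{thm:block-erasure}, whose separate proof (in Section~\ref{sec:triple-tradeoff-Singleton}) is where Lemmas~\ref{lemma:average-entropies} and~\ref{lemma:crazy-inequality} actually enter, handling $d-1\le n/2$ and $d-1>n/2$ exactly as you anticipate.

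Your plan---condition on the random pattern $Z$, concentrate on $|Z|\approx\delta n$, then apply the subset-entropy lemmas---would amount to inlining the proof of Theorem~\ref{thm:block-erasure} here together with the concentration step; it works, but the paper's channel-level reduction separates concerns more cleanly and makes the $\delta>\tfrac12$ case fall out automatically. Two small corrections to your sketch: conditioning on $Z$ is \emph{exact}, not ``up to $o(n)$'', since $Z$ is a deterministic function of $B^n$ and independent of $UA$; and the common parameter $t$ is not something you have to engineer---it is simply $S(A'|U)_{\s_0}$ in the single-letter case, and in the block-erasure proof it is $\overline s_{d-1}$ when $d-1\le n/2$ and the $t$ of Lemma~\ref{lemma:crazy-inequality} when $d-1>n/2$.
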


\medskip
Note that this region is also attainable, by the general coding theorem (cf.~\cite[Thm.~25.5.3]{wildebeast}). We discuss it below in Section \ref{sec:attainability} for the present concrete case, which is much simpler. 
\medskip

\begin{proof}
In the limit $n\rightarrow\infty$ and $\epsilon\rightarrow 0$, for the case of a general channel $\cN$ we are talking about 
\[
  (\tdC, \tdQ, \tdE)\in\cR(\cN) 
    := \overline{\bigcup_{n\geq 1}\frac1n \cR^{(1)}\left(\cN^{\ox n}\right)},
\]
the regularisation of the single-letter region $\cR(\cN)$.

We shall first calculate the single-letter region $\cR^{(1)}(\cE_{q,\delta})$, by simply plugging in $\epsilon=0$ and the erasure channel $\cE_{q,\delta}$ into Theorem \ref{thm:converse}. 
For this, consider a quantum state
\begin{equation}
    \s_0^{UAA'} = \sum_{u \in \cU} p(u) \ketbra{u}{u}^U \ox \ph_u^{AA'},
\end{equation}
which when passed through the erasure channel $\cE_{q,\delta}$ becomes
\begin{align}
    \s^{UAB} &= \sum_{u \in \cU} p(u) \ketbra{u}{u}^U \ox \left( (1-\d)\ph_u^{AB} + \d \ph_u^A \ox \ketbra{\perp}{\perp}^B \right)\\
             &= (1-\d) \s_1^{UAB} + \d \s _{\perp}^{UAB},
\end{align}
where $\s_1$ and  $\s_{\perp}$ are defined as 
\begin{align}
    \s_1 ^{UAB} &= \text{id}^{A' \to B} (\s_0^{UAA'}) = \sum_{u\in \cU} p(u) \ketbra {u}{u}^U \ox \ph_u^{AB}, \\
    \s _{\perp}^{UAB} &= \left( \sum_{u\in\cU} p(u) \ketbra{u}{u}^U \ox \ph_u^{A} \right) \ox \ketbra{\perp}{\perp}^B.
\end{align}
Note that $\ph_u^{AB}$ is a pure state for every $u \in \cU$, but $\ph_u^A = \tr_B \ph_u^{AB}$ may not be.
Evaluating the information quantities in Theorem \ref{thm:converse} for $\sigma ^{UAB}$, we have
\begin{align}
    I(AU:B)_\s &= I(U:B)_\s + I(A:B|U)_\s \notag\\
               &= S(U)_\s - S(U|B)_\s + S(A|U)_s - S(A|UB)_\s \label{eq:iiderasure:0a}\\
               &\begin{multlined}
                = (1-\d) S(U)_{\s_1} + \d S(U)_{\s_\perp} + H_2(\d) \\
               - (1-\d) S(U|B)_{\s_1} - \d S(U|B)_{\s_\perp} - H_2(\d) \\
               + (1-\d) S(A|U)_{\s_1} + \d S(A|U)_{\s_\perp} + H_2(\d) \\
               - (1-\d) S(A|UB)_{\s_1} - \d S(A|UB)_{\s_\perp} - H_2(\d) \label{eq:iiderasure:0b}
               \end{multlined}\\
               &=(1-\d) I(U:A')_{\s_0} + (1-\d) I(A:A'|U)_{\s_0} \label{eq:iiderasure:0c}\\
               &= (1-\d) ( S(A')_{\s_0} + S(A'|U)_{\s_0}), \label{eq:iiderasure:1}\\
    I(A\rangle B U)_\s &= S(B|U)_\s  - S(AB|U)_\s \notag\\
                       & \begin{multlined}
                           = (1-\d) S(B|U)_{\s_1} + \d S(B|U)_{\s_\perp} + H_2(\d) \\
                           - (1-\d) S(AB|U)_{\s_1} - \d S(AB|U)_{\s_\perp} - H_2(\d)\label{eq:iiderasure:1a}
                       \end{multlined}\\
                       &= (1-\d) S(B|U)_{\s_1} - \d S(A|U)_{\s_\perp} -\d S(B|U)_{\s _{\perp}}\label{eq:iiderasure:1b}\\
                       &= (1-\d) S(A'|U)_{\s_0} - \d S(A|U)_{\s_0} \label{eq:iiderasure:1c}\\
                       &= (1-2\d) S(A'|U)_{\s_0}, \label{eq:iiderasure:2}\\
    I(U:B)_\s + I(A\rangle B U)_\s  
               &= (1-\d)I(U:A')_{\s_0} + (1-2\d)S(A'|U)_{\s_0} \notag\\
               &=(1-\d)S(A')_{\s_0} -\d S(A'|U)_{\s_0}, \label{eq:iiderasure:3}
\end{align}
where $H_2(\d) = \d \log \d + (1-\d) \log (1-\d)$ is the binary entropy function. Equation~\eqref{eq:iiderasure:0a} follows from the definition of quantum mutual information. We get Equation~\eqref{eq:iiderasure:0b} expanding each of the entropy expression in terms of $\d$ by the chain rule. 
All the $H_2(\d)$ terms cancel each other. Furthermore, $S(U)_{\s_\perp} = S(U|B)_{\s_\perp}$, $S(A|U)_{\s_\perp} = S(A|UB)_{\s_\perp}$ because of product states $\sigma_\perp^{UAB} = \s_\perp^{AU} \ox \s_\perp^{B}$.
Applying the definition of mutual information on the remaining terms yields Equation~\eqref{eq:iiderasure:0c} which simplifies to Equation~\eqref{eq:iiderasure:1}. 
In a similar fashion, Equation~\eqref{eq:iiderasure:1a} comes from expanding the entropy expression in terms of $\d$. The $H_2(\d)$ terms cancel out, and 
$S(AB|U)_{\s_1}=0$ because $\ph_u^{AB}$ is a pure state and $S(B|U)_{\s_\perp}=0$ because $\sigma_\perp^{UAB} = \s_\perp^{UA} \ox \s_{\perp}^{B}$; this yields Equation~\eqref{eq:iiderasure:1b}. 
Because of the identity map from $\s_0$ to $\s_1$ and using the fact that $\s_\perp ^B = \ketbra {\perp}{\perp}$ is a pure state and that $\s_\perp ^{AU} = \s_1 ^{AU}$, we get Equation~\eqref{eq:iiderasure:1c}. This further reduced to Equation~\eqref{eq:iiderasure:2} because of the pure state $\ph_u ^{AA'}$. We get Equation~\eqref {eq:iiderasure:3} by simply expanding the mutual information term.
We know that 
\begin{equation}
    0 \leq S(A'|U)_{\s_0} \leq S(A')_{\s_0} \leq \log q. \label{eq:iiderasure:4}
\end{equation}
By defining  $t:=S(A'|U)_{\s_0}$ and combining the Inequalities \eqref{eq:iiderasure:1}, \eqref{eq:iiderasure:2}, \eqref{eq:iiderasure:3} and \eqref{eq:iiderasure:4} we obtain the claimed form for $\cR^{(1)}(\cE_{q,\delta})$.

It remains to show the additivity of the region when considering tensor powers, i.e.~$\cR^{(1)}\left(\cE_{q,\delta}^{\ox n}\right) = n\cR^{(1)}(\cE_{q,\delta})$. In \cite[Thm.~25.5.3]{wildebeast} this is done for $\delta\leq\frac12$ using a duality approach and exploiting the degradability of the erasure channel in the said regime; for $\delta>\frac12$ the additivity had been an open question. 
Here, we present a different proof, reducing a code for the i.i.d.~erasure channel $\cE_{q,\d}^{\ox n}$ to one for a suitable block erasure channel as in Section \ref{sec:problem}, and using the converse bounds from Theorem \ref{thm:block-erasure} in Section \ref{sec:triple-tradeoff-Singleton} below, which works for all $\delta$. For an error weight $0\leq w \leq n$, recall the definition of the block erasure channel $\cE_{q,w,n} : A^n \rightarrow B^n$,
\begin{equation*}
  \cE_{q,w,n}(\rho) 
  = \frac{1}{{n\choose w}} \sum_{J\subset[n],\atop |J|=w} \left(\tr_{A_J}\rho\right)^{B_{[n]\setminus J}} \ox \proj{\perp}^{B_J},
\end{equation*}
where $\ket{\perp}^{B_J} = \bigotimes_{j\in J} \ket{\perp}^{B_j}$. Then it can be checked immediately that 
\[
  \cE_{q,\d}^{\ox n}
   = \sum_{v=0}^n {n\choose v}\delta^v(1-\delta)^{n-v}\cE_{q,v,n},
\]
and furthermore that for every $w\leq v$, there exists a cptp map $\cD_{v|w}:B^n\rightarrow B^n$ such that $\cE_{t,v,n} = \cD_{v|w}\circ\cE_{t,w,n}$. This map is easily described: it measures which $w$ of the $n$ system $B^n$ are erased, and takes a uniformly random subset of the $v-w$ non-erased systems to erase them as well. 
Choose a $\lambda<\delta$ and let $w:=\lfloor \lambda n \rfloor$, so that by Hoeffding's bound,
\[
  \eta = \sum_{v=0}^{w-1} {n\choose v}\delta^v(1-\delta)^{n-v} \leq e^{-n(\delta-\lambda)^2}.
\]
Define now the channel (Note the normalization with $1-\eta$)
\[\begin{split}
  \cE_{q,\d}^{(n)} 
    &:= \frac{1}{1-\eta} \sum_{v=w}^n {n\choose v}\delta^v(1-\delta)^{n-v}\cE_{q,v,n} \\
    &=   \sum_{v=w}^n \frac{1}{1-\eta} {n\choose v}\delta^v(1-\delta)^{n-v}\cD_{v|w}\circ\cE_{q,w,n} \\
    &=:  \cD'\circ\cE_{q,w,n},
\end{split}\]
which in other words is a degraded version of $\cE_{q,w,n}$; at the same time, by its definition it satisfies $\frac12\|\cE_{q,\d}^{(n)} - \cE_{q,\d}^{\ox n}\|_\diamond \leq \eta$. Thus, a code for $\cE_{q,\d}^{\ox n}$ with error $\epsilon$ is one for $\cE_{q,\d}^{(n)}$ with error $\epsilon+\eta$, and by using the post-processing $\cD'$ before the decoding, also for $\cE_{q,w,n}$ with the same error and the same rates. Thus, the converse Theorem \ref{thm:block-erasure} for $\cE_{q,w,n}$ applies, keeping in mind its error $\epsilon+\eta\rightarrow 0$ as $n\rightarrow\infty$, showing that there exists a $t\in[0;\log q]$ with 
\begin{align*}
  \tdC+2\tdQ  &\leq (1-\lambda)(\log q + t), \\
  \tdQ-\tdE   &\leq (1-2\lambda)t, \\
  \tdC + \tdQ - \tdE &\leq (1-\lambda)\log q - \lambda t.
\end{align*}
As this is true for all $\lambda<\delta$, the claim follows. 
\end{proof}

\medskip
Using Fourier-Motzkin elimination of $t$ one can rewrite the region of Theorem \ref{thm:iid-erasure} in terms of linear inequalities for $\tdC$, $\tdQ$ and $\tdE$ only, as follows. 

\begin{theorem}[Capacity region for i.i.d.~erasure channel, alternate form]
\label{thm:iid-erasure-eliminated}
For an i.i.d.~erasure channel $\cE_{q,\delta}^{\ox n}$ with probability of erasure $\d$, in the limit of $n\rightarrow\infty$ and $\epsilon\rightarrow 0$, the converse bounds from Theorem \ref{thm:iid-erasure} for an EACQ code of net $n\tdC$ cbits, $n\tdQ$ qubits, $n\tdE$ ebits, respectively, can be expressed as follows.
Namely $(\tdC, \tdQ, \tdE)\in \cR(\cE_{q,\d}) = \cR^{(1)}(\cE_{q,\d})$ if and only if
\begin{align}
    \tdC + 2\tdQ           &\leq 2(1-\delta)\log q, \label{eq:iid-erasure-eliminated:thm:1}\\
  \tdQ -\tdE               &\leq \max\{0,1-2\delta\}\log q, \label{eq:iid-erasure-eliminated:thm:2}\\
  \tdC + \tdQ - \tdE             &\leq (1-\delta)\log q, \label{eq:iid-erasure-eliminated:thm:3}\\
  \tdC+(1+\d)\tdQ-(1-\d)\tdE &\leq (1-\delta)\log q, \label{eq:iid-erasure-eliminated:thm:4}\\
  &\hspace{-4.55cm}
  \begin{dcases}
    \phantom{\frac{3\d-1}{1-\d}}
    \frac{1-2\d}{1-\d}\tdC +\tdQ -\tdE &\!\!\!\!\leq (1-2\d)\log q\ \text{ if }\ \delta\leq\frac12, \\
    \frac{2\d-1}{1-\d}\tdC +\frac{3\d-1}{1-\d}\tdQ-\tdE &\!\!\!\!\leq (2\d-1)\log q\ \text{ if }\ \delta\geq\frac12.
  \end{dcases}\label{eq:iid-erasure-eliminated:thm:5}
\end{align}
\end{theorem}
The proof, which is obtained by Fourier-Motzkin elimination of $t$ from the bound in Theorem \ref{thm:iid-erasure}, is found in Appendix \ref{app:iid-erasure-eliminated}. 

\medskip
However, for the subsequent analysis the form of inequalities in Theorem \ref{thm:iid-erasure-eliminated} is not particularly useful. Instead, we use Theorem \ref{thm:iid-erasure} to understand the geometry of the region defined. Namely, for fixed $t\in[0;\log q]$, notice that
\[\begin{split}
  \cS_t := \bigl\{(\tdC, \tdQ, \tdE)\in\RR^3 \,:\, 
              &\tdC + 2\tdQ \leq (1-\d) (\log q + t), \bigr. \\
           \bigl.& \tdQ - \tdE  \leq (1-2\d)t,\,
              \tdC + \tdQ -\tdE  \leq (1-\d) \log q - \d t \bigr\}
\end{split}\]
is a simplicial cone, being defined by three linearly independent inequalities. The region identified in Theorem \ref{thm:iid-erasure} is simply the union of the $\cS_t$, $0\leq t\leq \log q$.
We can calculate the apex $a_t$ of $\cS_t$ by recalling that it is the place where all three inequalities are met with equality, resulting in 
\begin{equation}
  a_t = (\tdC_t,\tdQ_t,\tdE_t) 
      = \bigl( (1-\d)(\log q-t), (1-\d)t, \d t \bigr),
\end{equation}
and so $\cS_t = a_t + \cS'$, where 
\begin{equation}
  \cS' := \left\{(\tdC, \tdQ, \tdE)\in\RR^3 \,:\, 
              \tdC + 2\tdQ \leq 0,\,
              \tdQ - \tdE  \leq 0,\,
              \tdC + \tdQ - \tdE  \leq 0 \right\}
\end{equation}
is a simplicial cone rooted at the origin, which crucially is independent of $t$. As the apexes $a_t$ form a straight line connecting 
\begin{align*}
  a_0        &= \bigl( (1-\d)\log q, 0, 0 \bigr) \text{ and} \\
  a_{\log q} &= \bigl( 0, (1-\d)\log q, \d \log q \bigr), 
\end{align*}
we conclude that the region identified in Theorem \ref{thm:iid-erasure} is the union of translates of $\cS_0$ along this line, or equivalently, the convex hull of $\cS_0 \cup \cS_{\log q}$.

To understand it in more detail, let us look at $\cS'$; its apex clearly is the origin, and its extremal rays are determined by saturating with equality any two of the three inequalities. This leads to three infinite half lines,
\begin{align}
  \text{TP} &= \RR_{\geq 0}(-2,1,1), \\
  \text{RC} &= \RR_{\geq 0}(0,-1,-1), \\
  \text{DC} &= \RR_{\geq 0}(2,-1,1). 
\end{align}
Geometrically, we thus have $\cS_t = a_t + \text{TP}+\text{RC}+\text{DC}$ and the entire rate region is simply $[a_0;a_{\log q}] + \text{TP}+\text{RC}+\text{DC}$.

\section{The triple-tradeoff Singleton bound for EACQ codes}
\label{sec:triple-tradeoff-Singleton}
In this section we prove the EACQ Singleton bound, our main contribution in the present paper.
We do so by specializing Theorem~\ref{thm:converse} to the case of a block erasure channel, Equation \eqref{eq:block-erasure-channel-def}, which on a block $A^n$ of $n$ systems $A_i$, uniformly randomly erases $d-1$ out of $n$. The main challenge again is to obtain a single letter characterization of the allowable values of $C$, $Q$ and $E$. We achieve this by using the permutation symmetry of the channel, entropy inequalities as habitual in quantum Shannon theory, and in particular crucially the Lemmas~\ref{lemma:average-entropies} and \ref{lemma:crazy-inequality} introduced in Section \ref{sec:prelim}.
Recall the definition of the quantum block erasure channel,
\begin{equation*}
  \cE_{q,d-1,n}(\rho) = \frac{1}{{n\choose d-1}} \sum_{J\subset[n],\atop |J|=d-1} \left(\tr_{A_J}\rho\right)^{B_{[n]\setminus J}} \ox \proj{\perp}^{B_J},
\end{equation*}
and that an EACQ code has minimum distance $d$ if and only if it achieves error $0$ in transmission over $\cE_{q,d-1,n}$.

\begin{theorem}[EACQ Singleton bound, aka capacity region bound for block erasure channel]
\label{thm:block-erasure}
For the block~erasure channel $\cE_{q,d-1,n}$ and error $\epsilon=0$, the system of converse inequalities from Theorem \ref{thm:converse} for an EACQ code of net $C$ cbits, $Q$ qubits and $E$ ebits, respectively, reduces to the region of triples $(C,Q,E)$ such that there exists a $t \in [0,\log q]$ with
\begin{align}
    C + 2Q &\leq (n-d+1) (\log q + t), \\
    Q - E  &\leq (n-2d+2)t,            \\
    C+Q-E  &\leq (n-d+1) \log q - (d-1) t.
\end{align}
\end{theorem}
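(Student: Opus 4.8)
The plan is to apply the general one-shot converse, Theorem~\ref{thm:converse}, to the channel $\cN = \cE_{q,d-1,n}$ with error $\epsilon = 0$, so that all error terms drop out and we land inside the region $\cR^{(1)}(\cE_{q,d-1,n})$ as described in Eq.~\eqref{eq:rate-region-general}. Thus we must show: for every state $\s^{UAB}$ of the form in Eq.~\eqref{eq:sigma} coming from feeding ensembles of pure states $\ph_u^{AA'}$ (with $A' \equiv A^n$ the $n$-system input) through $\cE_{q,d-1,n}$, the three information quantities $I(UA:B)_\s$, $I(A\rangle BU)_\s$, and $I(U:B)_\s + I(A\rangle BU)_\s$ are bounded above by the claimed right-hand sides for a common parameter $t \in [0;\log q]$. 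The natural candidate, mirroring the i.i.d.\ proof of Theorem~\ref{thm:iid-erasure}, is to set $t := \overline{s}_{d-1}$ --- the normalised average entropy $\frac{1}{d-1}\EE_{|I|=d-1} S(A_I|U)_{\s_0}$ of a random $(d-1)$-subset of the input systems, conditioned on $U$ --- evaluated on the pre-channel state $\s_0^{UAA'}$. Since each $A_i$ is $q$-dimensional, $0 \le t \le \log q$ is immediate.

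First I would write out what the block erasure channel does to the mutual information quantities. Passing $\s_0^{UAA'}$ through $\cE_{q,d-1,n}$ gives, on the $B$-register, a uniform mixture over $(d-1)$-subsets $J$ of ``$A_{[n]\setminus J}$ survives, $B_J$ flagged $\perp$''. Conditioned on knowing which $J$ was erased (which is classical side information available in $B$, carried by the $\perp$-flags), one computes $S(B|U) $, $S(AB|U)$, etc., as averages over $J$ of entropies of the surviving marginals $S(A_{[n]\setminus J}|U)$ and of $S(A|U)$. Concretely I expect $I(UA:B)_\s = \EE_{|J|=d-1}\bigl(S(A_{[n]\setminus J}|U) + S(A_{[n]\setminus J})\bigr)$ up to handling of the classical flag, $I(A\rangle BU)_\s = S(A^n|U) - \EE_{|J|=d-1} S(A_J|U)$ (coherent information of the erased part given the rest, averaged), and the third quantity a corresponding combination. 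So the three bounds become statements purely about the averaged conditional entropies $\overline{s}_m := \frac1m \EE_{|I|=m} S(A_I|U)$ and their unconditioned counterparts of the state $\s_0$ on $A^n$ (really $A^nU$, treating $U$ as the register $Z$ of Lemma~\ref{lemma:average-entropies}).

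The remaining work is then pure entropy combinatorics, and this is where Lemmas~\ref{lemma:average-entropies} and~\ref{lemma:crazy-inequality} do the heavy lifting --- exactly as they were designed to. Setting $t = \overline{s}_{d-1}$: the bound $Q - E \le (n-2d+2)t$ should reduce (after the channel computation) to an inequality of the shape $(n-d+1)\overline{s}_{n-d+1} - (n-d+1)\overline{s}_{d-1} \le (n-2d+2)\overline{s}_{d-1}$ rearranged, i.e.\ $\overline{s}_{n-d+1} \le \overline{s}_{d-1}$ scaled appropriately when $d-1 \le n-d+1$, which is Eq.~\eqref{eq:average-conditional} of Lemma~\ref{lemma:average-entropies}; and $C+2Q \le (n-d+1)(\log q + t)$ and $C+Q-E \le (n-d+1)\log q - (d-1)t$ should follow by combining $S(A^n) \le (n-d+1)\overline{s}_{n-d+1} + (d-1)t$ (Lemma~\ref{lemma:crazy-inequality}, which needs the regime $d > \frac n2 + 1$) with the elementary bounds $\overline{s}_m \le \log q$ and $S(A_I) \le |I|\log q$ and subadditivity. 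I would do the three cases $d-1 < n-d+1$, $d-1 = n-d+1$, $d-1 > n-d+1$ separately since the sign of $n-2d+2$ flips, using Lemma~\ref{lemma:crazy-inequality} only in its stated range and arguing the complementary range by a direct subadditivity/dimension bound (in which case $t$ can be taken as large as $\log q$ or the relevant average, whichever makes the system feasible).

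\textbf{Main obstacle.} The delicate point is not any single inequality but the bookkeeping: (i) correctly accounting for the classical $\perp$-flags in the channel output $B$ when turning $I(UA:B)_\s$ etc.\ into averages of input entropies --- one must be careful that the flag is free classical information and contributes nothing to the entropic balance beyond selecting the subset; (ii) checking that one and the same $t = \overline{s}_{d-1}$ simultaneously certifies all three inequalities, rather than needing different parameters --- this is what makes the region a union of simplicial cones as in the erasure case; and (iii) handling the regime $d \le \frac n2 + 1$, where Lemma~\ref{lemma:crazy-inequality} is unavailable, so that the third inequality $C+Q-E \le (n-d+1)\log q - (d-1)t$ must be obtained by a separate, more elementary argument (plain subadditivity gives $S(A^n) \le n\log q$ and $\overline{s}_m \le \log q$, which combined with the channel computation should still close the bound for any admissible $t$). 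I expect (iii) to be where most of the care goes.
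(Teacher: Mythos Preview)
Your overall strategy --- apply Theorem~\ref{thm:converse} at $\epsilon=0$, rewrite the three information quantities on $\sigma$ as averages over the erased subset $J$ of entropies of the pre-channel ensemble, then bound using Lemmas~\ref{lemma:average-entropies} and~\ref{lemma:crazy-inequality} --- is exactly the paper's. But there is a genuine gap in the execution, and you have inverted which regime is hard.

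First a small correction to your channel computation: the coherent information comes out as $I(A\rangle BU)_\sigma = (n-d+1)\overline{s}_{n-d+1} - (d-1)\overline{s}_{d-1}$, not $S(A^n|U) - \EE_{|J|=d-1}S(A_J|U)$; the surviving-block average $(n-d+1)\overline{s}_{n-d+1}$ appears, not the full $n\overline{s}_n$. This matters for what follows.

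In the regime $d-1 \leq \tfrac{n}{2}$ your candidate $t = \overline{s}_{d-1}$ works perfectly, using only the monotonicity $\overline{s}_n \leq \overline{s}_{n-d+1} \leq \overline{s}_{d-1}$ from Lemma~\ref{lemma:average-entropies} together with $\widehat{s}_\ell \leq \log q$. No separate subadditivity argument is needed, and Lemma~\ref{lemma:crazy-inequality} plays no role. This is the \emph{easy} case, contrary to your obstacle~(iii).

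In the regime $d-1 > \tfrac{n}{2}$, however, $t = \overline{s}_{d-1}$ fails the $Q-E$ inequality. You would need $(n-d+1)\overline{s}_{n-d+1} - (d-1)\overline{s}_{d-1} \leq (n-2d+2)\overline{s}_{d-1}$, i.e.\ $\overline{s}_{n-d+1} \leq \overline{s}_{d-1}$; but now $n-d+1 < d-1$, so Lemma~\ref{lemma:average-entropies} gives the \emph{reverse} inequality $\overline{s}_{d-1} \leq \overline{s}_{n-d+1}$, and since $n-2d+2<0$ the bound goes the wrong way. The parameter that actually works here is precisely the $t$ \emph{defined} in Lemma~\ref{lemma:crazy-inequality} via $(n-2d+2)t := (n-d+1)\overline{s}_{n-d+1} - (d-1)\overline{s}_{d-1}$. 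That lemma then does triple duty: it certifies $0 \leq t \leq \overline{s}_{d-1}$ (giving the $C+Q-E$ bound), it supplies $n\overline{s}_n \leq (n-d+1)\overline{s}_{n-d+1} + (d-1)t$ (giving the $C+2Q$ bound), and the $Q-E$ bound holds with equality by construction. So Lemma~\ref{lemma:crazy-inequality} is essential in the hard case not merely as an entropy inequality for $S(A^n)$ but as the \emph{source} of the correct $t$, which is in general strictly smaller than your $\overline{s}_{d-1}$.
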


\begin{proof}
According to Theorem \ref{thm:converse}, there is a quantum state of the form 
\begin{equation*}
    \s_0^{UA^n{A'}^n} = \sum_{u \in \cU} p(u) \ketbra{u}{u}^U \ox \ph_u^{A^n {A'}^n},
\end{equation*}
which when passed through $\cE_{q,d-1,n}$ becomes
\begin{equation*}
    \s^{UA^nB^n} 
      = \sum_{u \in \cU} p(u) \ketbra{u}{u}^U \ox \cE_{q,d-1,n}^{A'^n \rightarrow B^n} (\ph_u^{A^n A'^n}),
\end{equation*}
and such that
\begin{align*}
  C+2Q  &\leq I(A^nU:B^n)_\sigma = I(U:B^n)_\sigma + I(A^n:B^n|U)_\sigma, \\
  Q-E   &\leq I(A^n\rangle B^nU)_\sigma, \\
  C+Q-E &\leq I(U:B^n)_\sigma + I(A^n\rangle B^nU)_\sigma.
\end{align*}
To evaluate the information quantities occurring on the r.h.s., we write 
\begin{align*}
  \sigma^{UA^n{A'}^n} 
    &= \EE_{J\subset[n],\atop |J|=d-1} \sigma(J)^{UA^n{A'}^n}, \text{ where}\\
  \sigma(J)^{UA^n{A'}^n} 
    &= \sum_{u \in \cU} p(u) \ketbra{u}{u}^U \ox \ph_u^{A^n B_{J^c}} \ox \proj{\perp}^{B_J}. 
\end{align*}
As the $\sigma(J)^{UA^n{A'}^n}$ are orthogonal on $B^n$, we can expand (implicitly fixing $J$ as a uniformly random subset of $[n]$ of size $d-1$):
\begin{align*}
  I(U:B^n)_\sigma 
    &= \EE_J\, I(U:B^n)_{\sigma(J)}, \text{ where} \\
  I(U:B^n)_{\sigma(J)} 
    &= S\left(\EE_U\,\ph_u^{A_{J^c}'}\right) - \EE_U\, S\left(\ph_u^{A_{J^c}'}\right); \\
  I(A^n:B^n|U)_\sigma 
    &= \EE_J \EE_U\, I(A^n:B_{J^c})_{\ph_u}, \text{ where} \\
  I(A^n:B_{J^c})_{\ph_u}
    &= S\left(\ph_u^{A^n}\right) + S\left(\ph_u^{A_{J^c}'}\right) - S\left(\ph_u^{A^nA_{J^c}'}\right) \notag \\
    &= S\left(\ph_u^{{A'}^n}\right) + S\left(\ph_u^{A_{J^c}'}\right) - S\left(\ph_u^{A_J'}\right); \\
  I(A^n\rangle B^nU)_\sigma 
    &= \EE_J \EE_U\, I(A^n\rangle B_{J^c})_{\ph_u}, \text{ where} \\
  I(A^n\rangle B_{J^c})_{\ph_u}
    &= S\left(\ph_u^{A_{J^c}'}\right) - S\left(\ph_u^{A^nA_{J^c}'}\right) \notag \\
    &= S\left(\ph_u^{A_{J^c}'}\right) - S\left(\ph_u^{A_J'}\right). 
\end{align*}

Define two sets of entropy terms, as a function of $\s_0^{UA^n{A'}^n}$, or rather the ensemble $\left\{p(u),\ph_u^{A^n {A'}^n}\right\}$, for integer $1\leq\ell\leq n$. Consider $U$ a random variable distributed according to the law $p(u)$, and let
\begin{align}
  \label{eq:hat-s}
  \widehat{s}_\ell 
      &:= \frac{1}{\ell} \EE_{I\subset[n],\atop |I|=\ell} 
         S\left( \EE_U\, \ph_U^{A_I'} \right) 
        = \frac{1}{\ell}\frac{1}{{n\choose\ell}} \sum_{I\subset[n],\atop |I|=\ell} S(A_I')_{\s_0}, \\
  \label{eq:bar-s}
  \overline{s}_\ell 
      &:= \frac{1}{\ell} \EE_{I\subset[n],\atop |I|=\ell} \EE_U\, S\left(\ph_U^{A_I'}\right)
        = \frac{1}{\ell}\frac{1}{{n\choose\ell}} \sum_{I\subset[n],\atop |I|=\ell} S(A_I'|U)_{\s_0}.
\end{align}
With these definitions we can plug the expanded information quantities into the above, and get 
\begin{align*}
  C+2Q  &\leq I(A^nU:B^n)_\sigma 
         = (n-d+1)\widehat{s}_{n-d+1} + n\overline{s}_n - (d-1)\overline{s}_{d-1}, \\
  Q-E   &\leq I(A^n\rangle B^nU)_\sigma
         = (n-d+1)\overline{s}_{n-d+1} - (d-1)\overline{s}_{d-1}, \\
  C+Q-E &\leq I(U:B^n)_\sigma + I(A^n\rangle B^nU)_\sigma
         = (n-d+1)\widehat{s}_{n-d+1} - (d-1)\overline{s}_{d-1}.
\end{align*}

To conclude the proof, we use certain obvious and known relations between the quantities in Eqs. (\ref{eq:hat-s}) and (\ref{eq:bar-s}). To start,
\begin{align*}
  \forall\ell\quad  0&\leq\overline{s}_\ell\leq\widehat{s}_\ell\leq\log q,\\
  \forall\ell\leq m\quad \overline{s}_m&\leq \overline{s}_\ell,
\end{align*}
the first chain of inequalities by definition of the quantities and concavity of the entropy, the second inequality by Lemma \ref{lemma:average-entropies} (\cite[Lemma~2]{EAQECC-Singleton}). 

Now, for $d-1\leq\frac{n}{2}$, we have by the above $\overline{s}_n \leq \overline{s}_{n-d+1} \leq \overline{s}_{d-1}$, and so we get
\begin{align*}
    C + 2Q &\leq (n-d+1)\log q + (n-d+1)\overline{s}_{d-1}, \\
    Q - E  &\leq (n-2d+2)\overline{s}_{d-1}, \\
    C+Q-E  &\leq (n-d+1) \log q - (d-1)\overline{s}_{d-1}.
\end{align*}
With $t:=\overline{s}_{d-1}\in[0;\log q]$ we arrive at the form of the region claimed in the theorem. 

For $d-1>\frac{n}{2}$, we have to proceed differently, and define $t$ by the relation $(n-d+1)\overline{s}_{n-d+1} - (d-1)\overline{s}_{d-1} =: (n-2d+2)t$. From
Lemma \ref{lemma:crazy-inequality}, we know 
$0 \leq t \leq \overline{s}_{d-1} \leq \log q$; the same lemma tells us
\[
  n\overline{s}_n \leq (n-d+1)\overline{s}_{n-d+1} + (d-1)t.
\]
Thus we get
\begin{align*}
    C + 2Q &\leq (n-d+1)\log q + n\overline{s}_n - (d-1)\overline{s}_{d-1} \\
           &\leq (n-d+1)\log q + (n-d+1)\overline{s}_{n-d+1} + (d-1)t - (d-1)\overline{s}_{d-1} \\
           &=    (n-d+1)\log q + (n-2d+2)t + (d-1)t \\
           &=    (n-d+1)\log q + (n-d+1)t, \\
    Q - E  &\leq (n-2d+2)t, \\
    C+Q-E  &\leq (n-d+1) \log q - (d-1)\overline{s}_{d-1} \\
           &\leq (n-d+1) \log q - (d-1)t,
\end{align*}
concluding the proof.
\end{proof}

\section{Attainability of the bounds: constructions}
\label{sec:attainability}
While the Shannon-theoretic bounds from Theorem \ref{thm:iid-erasure} are met in the i.i.d.~limit by the direct coding theorem from \cite{HsiehWilde}, thus establishing that the region is indeed the capacity region of the erasure channel for any $q$ and and $\delta$, the analogous statement is far from clear for the zero-error coding problem via $\cE_{q,d-1,n}$.

However, from our geometric analysis at the end of Section \ref{sec:triple-tradeoff-erasure} we can gain significant insight into this question as we show in Figures~\ref{fig:attainability}, \ref{fig:attainability:2} and \ref{fig:attainability:3} the Singleton bound region and the attainable portions within it. 

\begin{figure}[ht]
    \tikzset{every picture/.style={line width=0.75pt}} 

\begin{tikzpicture}[x=0.75pt,y=0.75pt,yscale=-0.45,xscale=0.45]
\scriptsize
\draw  [color={rgb, 255:red, 147; green, 198; blue, 226 }  ,draw opacity=1 ][fill={rgb, 255:red, 147; green, 198; blue, 226 }  ,fill opacity=1 ] (481.16,239.53) -- (436.09,716.19) -- (292.51,637.77) -- cycle ;
\draw  [color={rgb, 255:red, 147; green, 198; blue, 226 }  ,draw opacity=1 ][fill={rgb, 255:red, 147; green, 198; blue, 226 }  ,fill opacity=1 ] (1062.18,114.61) -- (1216.4,522.29) -- (1017.06,584.8) -- cycle ;
\draw  [color={rgb, 255:red, 147; green, 198; blue, 226 }  ,draw opacity=1 ][fill={rgb, 255:red, 147; green, 198; blue, 226 }  ,fill opacity=1 ] (479.8,245.14) -- (1059.57,112.14) -- (873.34,503.72) -- (293.58,636.72) -- cycle ;
\draw  [color={rgb, 255:red, 147; green, 198; blue, 226 }  ,draw opacity=1 ][fill={rgb, 255:red, 147; green, 198; blue, 226 }  ,fill opacity=1 ] (1059.56,112.14) -- (481.3,244.79) -- (639.8,656.19) -- (1218.06,523.54) -- cycle ;
\draw  [color={rgb, 255:red, 147; green, 198; blue, 226 }  ,draw opacity=1 ][fill={rgb, 255:red, 147; green, 198; blue, 226 }  ,fill opacity=1 ] (478.7,245.39) -- (1059.55,112.15) -- (1015.12,583.22) -- (434.27,716.47) -- cycle ;
\draw [color={rgb, 255:red, 230; green, 219; blue, 219 }  ,draw opacity=1 ]   (294.28,638.81) -- (437.8,717.65) ;
\draw [color={rgb, 255:red, 230; green, 219; blue, 219 }  ,draw opacity=1 ]   (437.8,717.65) -- (636,653.41) ;
\draw [color={rgb, 255:red, 230; green, 219; blue, 219 }  ,draw opacity=1 ]   (294.28,638.81) -- (636,653.41) ;
\draw [color={rgb, 255:red, 230; green, 219; blue, 219 }  ,draw opacity=1 ]   (875.2,507.43) -- (1018.72,586.26) ;
\draw [color={rgb, 255:red, 230; green, 219; blue, 219 }  ,draw opacity=1 ]   (1018.72,586.26) -- (1216.91,522.03) ;
\draw [color={rgb, 255:red, 230; green, 219; blue, 219 }  ,draw opacity=1 ]   (875.2,507.43) -- (1216.91,522.03) ;
\draw [color={rgb, 255:red, 208; green, 2; blue, 27 }  ,draw opacity=1 ][line width=1.5]    (482.23,244.66) -- (1063.14,113.28) ;
\draw [color={rgb, 255:red, 216; green, 213; blue, 213 }  ,draw opacity=1 ]   (294.28,638.81) -- (875.2,507.43) ;
\draw [color={rgb, 255:red, 216; green, 213; blue, 213 }  ,draw opacity=1 ]   (636,653.41) -- (1216.91,522.03) ;
\draw [color={rgb, 255:red, 216; green, 213; blue, 213 }  ,draw opacity=1 ]   (437.8,717.65) -- (1018.72,586.26) ;
\draw    (482.23,244.66) -- (260.97,710) ;
\draw [shift={(260.11,711.81)}, rotate = 295.43] [color={rgb, 255:red, 0; green, 0; blue, 0 }  ][line width=0.75]    (10.93,-3.29) .. controls (6.95,-1.4) and (3.31,-0.3) .. (0,0) .. controls (3.31,0.3) and (6.95,1.4) .. (10.93,3.29)   ;
\draw [color={rgb, 255:red, 128; green, 128; blue, 128 }  ,draw opacity=1 ]   (482.23,244.66) -- (652.38,695.34) ;
\draw [shift={(653.08,697.21)}, rotate = 249.32] [color={rgb, 255:red, 128; green, 128; blue, 128 }  ,draw opacity=1 ][line width=0.75]    (10.93,-3.29) .. controls (6.95,-1.4) and (3.31,-0.3) .. (0,0) .. controls (3.31,0.3) and (6.95,1.4) .. (10.93,3.29)   ;
\draw    (482.23,244.66) -- (431.16,782.81) ;
\draw [shift={(430.97,784.8)}, rotate = 275.42] [color={rgb, 255:red, 0; green, 0; blue, 0 }  ][line width=0.75]    (10.93,-3.29) .. controls (6.95,-1.4) and (3.31,-0.3) .. (0,0) .. controls (3.31,0.3) and (6.95,1.4) .. (10.93,3.29)   ;
\draw [color={rgb, 255:red, 128; green, 128; blue, 128 }  ,draw opacity=1 ]   (1063.14,113.28) -- (841.89,578.62) ;
\draw [shift={(841.03,580.42)}, rotate = 295.43] [color={rgb, 255:red, 128; green, 128; blue, 128 }  ,draw opacity=1 ][line width=0.75]    (10.93,-3.29) .. controls (6.95,-1.4) and (3.31,-0.3) .. (0,0) .. controls (3.31,0.3) and (6.95,1.4) .. (10.93,3.29)   ;
\draw    (1063.14,113.28) -- (1233.29,563.95) ;
\draw [shift={(1234,565.82)}, rotate = 249.32] [color={rgb, 255:red, 0; green, 0; blue, 0 }  ][line width=0.75]    (10.93,-3.29) .. controls (6.95,-1.4) and (3.31,-0.3) .. (0,0) .. controls (3.31,0.3) and (6.95,1.4) .. (10.93,3.29)   ;
\draw    (1063.14,113.28) -- (1012.07,651.42) ;
\draw [shift={(1011.88,653.41)}, rotate = 275.42] [color={rgb, 255:red, 0; green, 0; blue, 0 }  ][line width=0.75]    (10.93,-3.29) .. controls (6.95,-1.4) and (3.31,-0.3) .. (0,0) .. controls (3.31,0.3) and (6.95,1.4) .. (10.93,3.29)   ;
\draw    (173.82,152.69) -- (154.98,75.92) ;
\draw [shift={(154.5,73.98)}, rotate = 76.21] [color={rgb, 255:red, 0; green, 0; blue, 0 }  ][line width=0.75]    (10.93,-3.29) .. controls (6.95,-1.4) and (3.31,-0.3) .. (0,0) .. controls (3.31,0.3) and (6.95,1.4) .. (10.93,3.29)   ;
\draw    (173.82,152.69) -- (232.91,135.49) ;
\draw [shift={(234.83,134.93)}, rotate = 163.77] [color={rgb, 255:red, 0; green, 0; blue, 0 }  ][line width=0.75]    (10.93,-3.29) .. controls (6.95,-1.4) and (3.31,-0.3) .. (0,0) .. controls (3.31,0.3) and (6.95,1.4) .. (10.93,3.29)   ;
\draw    (173.82,152.69) -- (147.4,185.73) ;
\draw [shift={(146.15,187.3)}, rotate = 308.64] [color={rgb, 255:red, 0; green, 0; blue, 0 }  ][line width=0.75]    (10.93,-3.29) .. controls (6.95,-1.4) and (3.31,-0.3) .. (0,0) .. controls (3.31,0.3) and (6.95,1.4) .. (10.93,3.29)   ;

\draw (470.28,198.55) node [anchor=north west][inner sep=0.75pt]  [font=\large] [align=left] {$\displaystyle a_{0}$};
\draw (1050.44,79.69) node [anchor=north west][inner sep=0.75pt]  [font=\large] [align=left] {$\displaystyle a\ _{\log q}$};
\draw (131.6,77.78) node [anchor=north west][inner sep=0.75pt]   [align=left] {$\displaystyle C$};
\draw (242.54,127.76) node [anchor=north west][inner sep=0.75pt]   [align=left] {$\displaystyle Q$};
\draw (147.46,187.19) node [anchor=north west][inner sep=0.75pt]   [align=left] {$\displaystyle E$};
\draw (316.67,504.27) node [anchor=north west][inner sep=0.75pt]  [rotate=-297.34] [align=left] { DC\; $\displaystyle ( 2,-1,1)$};
\draw (421.56,551.59) node [anchor=north west][inner sep=0.75pt]  [rotate=-276.16] [align=left] { RC\; $\displaystyle ( 0,-1,-1)$};
\draw (579.9,418.6) node [anchor=north west][inner sep=0.75pt]  [rotate=-68.24] [align=left] { TP\; $\displaystyle ( -2,1,1)$};
\draw (897.49,377.3) node [anchor=north west][inner sep=0.75pt]  [rotate=-297.34] [align=left] {DC\; $\displaystyle ( 2,-1,1)$};
\draw (1002.39,424.62) node [anchor=north west][inner sep=0.75pt]  [rotate=-276.16] [align=left] {RC\; $\displaystyle ( 0,-1,-1)$};
\draw (1160.72,291.63) node [anchor=north west][inner sep=0.75pt]  [rotate=-68.24] [align=left] {TP\; $\displaystyle ( -2,1,1)$};
\draw (742.62,157.81) node [anchor=north west][inner sep=0.75pt]  [rotate=-348.04] [align=left] {\textcolor[rgb]{0.82,0.01,0.11}{open}};
\end{tikzpicture}
\caption{Singleton bound region when $\frac{d-1}{n} < \frac{1}{2}$. The blue shaded polytope region indicates the Singleton bound region from Theorem~\ref{thm:block-erasure}. The points $a_0$ and $a_{\log q}$ can be attained using classical MDS codes and EAQMDS codes respectively for large enough $q$. 
Using quantum teleportation (TP), resource conversion (RC) and superdense coding (DC) we can achieve the triangular cones on the boundary of the region. Attainability of the line segment $[a_0, a_{\log q}]$ joining the apexes of the cones (in pink) remains an open question for now. 
Observe that $a_0+\text{TP}$ does not contribute to the extremal edges of the polytope as it is already contained in the convex hull of the other five rays.}
\label{fig:attainability}
\end{figure}
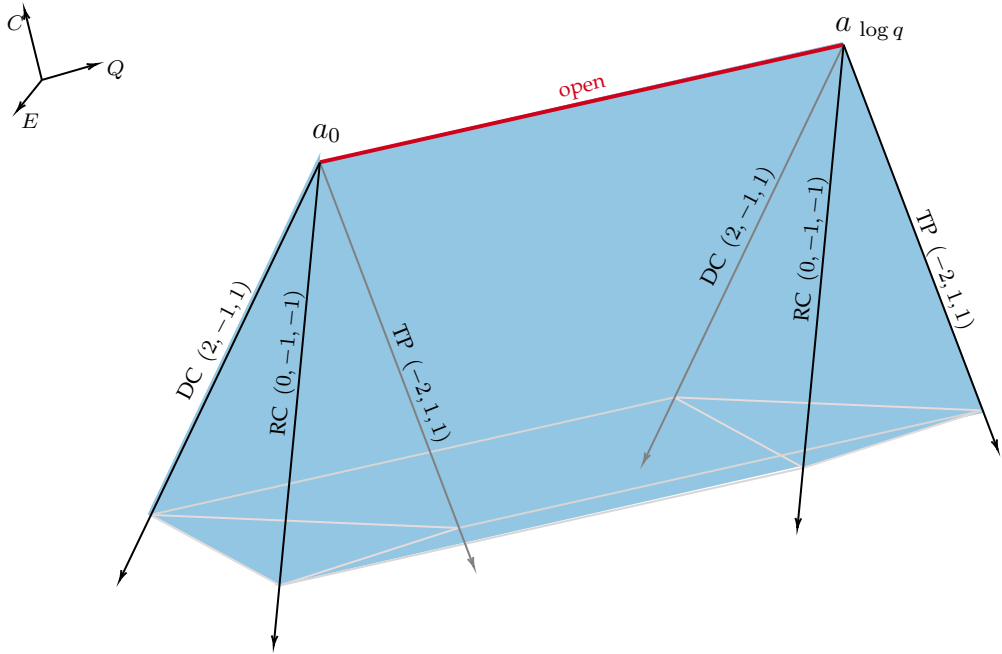

To start with, recall that it has two extreme points, $a_0=\bigl((n-d+1)\log q,0,0\bigr)$ and $a_{\log q}=\bigl(0,(n-d+1)\log q,(d-1)\log q\bigr)$, and we do know that for all sufficiently large $q$, they are both attained by well-known EACQ codes:
\begin{itemize}
    \item In fact, $a_0$ corresponds to a classical MDS code, its $q$-ary symbols encoded into $A$ in an orthonormal basis; it has $Q=E=0$ and transmits $C=(n-d+1)\log q$ cbits, the maximum amount of information for a code capable of correcting $d-1$ erasures. It is well-known that such classical MDS codes exist for all prime powers $q\geq n-1$ (and for some non-prime-powers, too) \cite{MacWilliams-Sloane,Tolhuizen,Ball:MDS-conjecture}.

    \item On the other hand, $a_{\log q}$ corresponds to a ``maximum entanglement EAQECC'' as discussed in \cite[Sec.~V]{EAQECC-Singleton}, where it was linked to the entanglement-assisted quantum capacity of the i.i.d.~erasure channel (EAQ); it has $C=0$, transmits $Q=(n-d+1)\log q$ qubits and and consumes $E=(d-1)\log q$ ebits. In \cite[Sec.~V]{EAQECC-Singleton}, and references therein, it is discussed that for $q \geq n+1$ one can always construct EAQECC codes with arbitrary parameters.
\end{itemize}

\begin{figure}[ht]
    \tikzset{every picture/.style={line width=0.75pt}} 

\begin{tikzpicture}[x=0.75pt,y=0.75pt,yscale=-0.5,xscale=0.5]
\scriptsize
\draw    (76.82,100.69) -- (57.98,23.92) ;
\draw [shift={(57.5,21.98)}, rotate = 76.21] [color={rgb, 255:red, 0; green, 0; blue, 0 }  ][line width=0.75]    (10.93,-3.29) .. controls (6.95,-1.4) and (3.31,-0.3) .. (0,0) .. controls (3.31,0.3) and (6.95,1.4) .. (10.93,3.29)   ;
\draw    (76.82,100.69) -- (135.91,83.49) ;
\draw [shift={(137.83,82.93)}, rotate = 163.77] [color={rgb, 255:red, 0; green, 0; blue, 0 }  ][line width=0.75]    (10.93,-3.29) .. controls (6.95,-1.4) and (3.31,-0.3) .. (0,0) .. controls (3.31,0.3) and (6.95,1.4) .. (10.93,3.29)   ;
\draw    (76.82,100.69) -- (50.4,133.73) ;
\draw [shift={(49.15,135.3)}, rotate = 308.64] [color={rgb, 255:red, 0; green, 0; blue, 0 }  ][line width=0.75]    (10.93,-3.29) .. controls (6.95,-1.4) and (3.31,-0.3) .. (0,0) .. controls (3.31,0.3) and (6.95,1.4) .. (10.93,3.29)   ;
\draw  [color={rgb, 255:red, 147; green, 198; blue, 226 }  ,draw opacity=1 ][fill={rgb, 255:red, 147; green, 198; blue, 226 }  ,fill opacity=1 ] (354.41,285.31) -- (582.57,121.9) -- (722.71,448.54) -- (494.55,611.95) -- cycle ;
\draw  [color={rgb, 255:red, 147; green, 198; blue, 226 }  ,draw opacity=1 ][fill={rgb, 255:red, 147; green, 198; blue, 226 }  ,fill opacity=1 ] (353.06,284.75) -- (497.39,608.83) -- (310.76,658.52) -- cycle ;
\draw  [color={rgb, 255:red, 147; green, 198; blue, 226 }  ,draw opacity=1 ][fill={rgb, 255:red, 147; green, 198; blue, 226 }  ,fill opacity=1 ] (352.03,279.56) -- (309.84,658.46) -- (175.42,596.09) -- cycle ;
\draw [color={rgb, 255:red, 230; green, 219; blue, 219 }  ,draw opacity=1 ]   (177.1,596.95) -- (311.43,659.62) ;
\draw [color={rgb, 255:red, 230; green, 219; blue, 219 }  ,draw opacity=1 ]   (405.38,433.35) -- (539.72,496.02) ;
\draw [color={rgb, 255:red, 230; green, 219; blue, 219 }  ,draw opacity=1 ]   (539.72,496.02) -- (725.24,444.96) ;
\draw [color={rgb, 255:red, 230; green, 219; blue, 219 }  ,draw opacity=1 ]   (405.38,433.35) -- (725.24,444.96) ;
\draw    (353.02,283.62) -- (146.09,653.23) ;
\draw [shift={(145.11,654.98)}, rotate = 299.24] [color={rgb, 255:red, 0; green, 0; blue, 0 }  ][line width=0.75]    (10.93,-3.29) .. controls (6.95,-1.4) and (3.31,-0.3) .. (0,0) .. controls (3.31,0.3) and (6.95,1.4) .. (10.93,3.29)   ;
\draw [color={rgb, 255:red, 0; green, 0; blue, 0 }  ,draw opacity=1 ]   (353.02,283.62) -- (512.13,641.54) ;
\draw [shift={(512.94,643.37)}, rotate = 246.03] [color={rgb, 255:red, 0; green, 0; blue, 0 }  ,draw opacity=1 ][line width=0.75]    (10.93,-3.29) .. controls (6.95,-1.4) and (3.31,-0.3) .. (0,0) .. controls (3.31,0.3) and (6.95,1.4) .. (10.93,3.29)   ;
\draw [color={rgb, 255:red, 128; green, 128; blue, 128 }  ,draw opacity=1 ]   (581.3,120.02) -- (374.37,489.63) ;
\draw [shift={(373.4,491.38)}, rotate = 299.24] [color={rgb, 255:red, 128; green, 128; blue, 128 }  ,draw opacity=1 ][line width=0.75]    (10.93,-3.29) .. controls (6.95,-1.4) and (3.31,-0.3) .. (0,0) .. controls (3.31,0.3) and (6.95,1.4) .. (10.93,3.29)   ;
\draw    (581.3,120.02) -- (738.2,481.63) ;
\draw [shift={(739,483.46)}, rotate = 246.54] [color={rgb, 255:red, 0; green, 0; blue, 0 }  ][line width=0.75]    (10.93,-3.29) .. controls (6.95,-1.4) and (3.31,-0.3) .. (0,0) .. controls (3.31,0.3) and (6.95,1.4) .. (10.93,3.29)   ;
\draw [color={rgb, 255:red, 128; green, 128; blue, 128 }  ,draw opacity=1 ]   (581.3,120.02) -- (533.55,547.41) ;
\draw [shift={(533.32,549.4)}, rotate = 276.38] [color={rgb, 255:red, 128; green, 128; blue, 128 }  ,draw opacity=1 ][line width=0.75]    (10.93,-3.29) .. controls (6.95,-1.4) and (3.31,-0.3) .. (0,0) .. controls (3.31,0.3) and (6.95,1.4) .. (10.93,3.29)   ;
\draw [color={rgb, 255:red, 208; green, 2; blue, 27 }  ,draw opacity=1 ][line width=1.5]    (353.02,283.62) -- (580.4,121.09) ;
\draw [color={rgb, 255:red, 230; green, 219; blue, 219 }  ,draw opacity=1 ]   (176.06,596.32) -- (310.4,658.98) ;
\draw [color={rgb, 255:red, 230; green, 219; blue, 219 }  ,draw opacity=1 ]   (310.4,658.98) -- (495.91,607.92) ;
\draw [color={rgb, 255:red, 230; green, 219; blue, 219 }  ,draw opacity=1 ]   (176.06,596.32) -- (495.91,607.92) ;
\draw [color={rgb, 255:red, 230; green, 219; blue, 219 }  ,draw opacity=1 ]   (177.1,596.95) -- (405.38,433.35) ;
\draw    (353.06,284.75) -- (305.31,712.14) ;
\draw [shift={(305.08,714.12)}, rotate = 276.38] [color={rgb, 255:red, 0; green, 0; blue, 0 }  ][line width=0.75]    (10.93,-3.29) .. controls (6.95,-1.4) and (3.31,-0.3) .. (0,0) .. controls (3.31,0.3) and (6.95,1.4) .. (10.93,3.29)   ;
\draw [color={rgb, 255:red, 230; green, 219; blue, 219 }  ,draw opacity=1 ]   (494.55,611.95) -- (722.71,448.54) ;

\draw (34.6,25.78) node [anchor=north west][inner sep=0.75pt]   [align=left] {$\displaystyle C$};
\draw (145.54,75.76) node [anchor=north west][inner sep=0.75pt]   [align=left] {$\displaystyle Q$};
\draw (50.46,135.19) node [anchor=north west][inner sep=0.75pt]   [align=left] {$\displaystyle E$};
\draw (341.2,244.61) node [anchor=north west][inner sep=0.75pt]  [font=\large] [align=left] {$\displaystyle a_{0}$};
\draw (567.91,90.96) node [anchor=north west][inner sep=0.75pt]  [font=\large] [align=left] {$\displaystyle a\ _{\log q}$};
\draw (165.01,545.99) node [anchor=north west][inner sep=0.75pt]  [rotate=-297.34] [align=left] {DC $\displaystyle ( 2,-1,1)$};
\draw (434.92,491.48) node [anchor=north west][inner sep=0.75pt]  [rotate=-70.73] [align=left] {TP $\displaystyle ( -2,1,1)$};
\draw (442.87,312.66) node [anchor=north west][inner sep=0.75pt]  [rotate=-297.34] [align=left] {DC $\displaystyle ( 2,-1,1)$};
\draw (527.55,360.76) node [anchor=north west][inner sep=0.75pt]  [rotate=-276.16] [align=left] {RC (0,-1,1)};
\draw (657.56,230.93) node [anchor=north west][inner sep=0.75pt]  [rotate=-65.6] [align=left] {TP $\displaystyle ( -2,1,1)$};
\draw (445.14,186.55) node [anchor=north west][inner sep=0.75pt]  [rotate=-321.16] [align=left] {\textcolor[rgb]{0.82,0.01,0.11}{open}};
\draw (293.43,572.17) node [anchor=north west][inner sep=0.75pt]  [rotate=-276.16] [align=left] {RC (0,-1,1)};

\end{tikzpicture}
    \caption{Singleton bound region when $\frac{d-1}{n} > \frac{1}{2}$. Similar to Figure~\ref{fig:attainability}, the convex hull of TP, DC and RC added to the apex points $a_0$ and $a_{\log q}$ constitute the polytope and can be achieved. The segment $[a_0, a_{\log q}]$ (in pink) though is currently not known to be achievable. However, unlike the previous case, it is $a_{\log q}+\text{RC}$ that does not contribute to the extremal edges of the polytope in this case, as it is contained in the convex hull of the other five rays.}
\label{fig:attainability:2}
\end{figure}
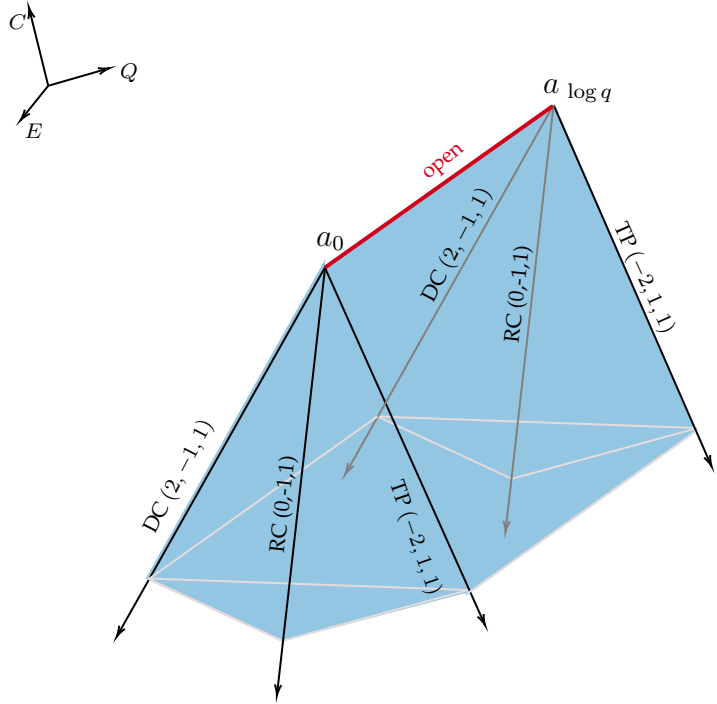

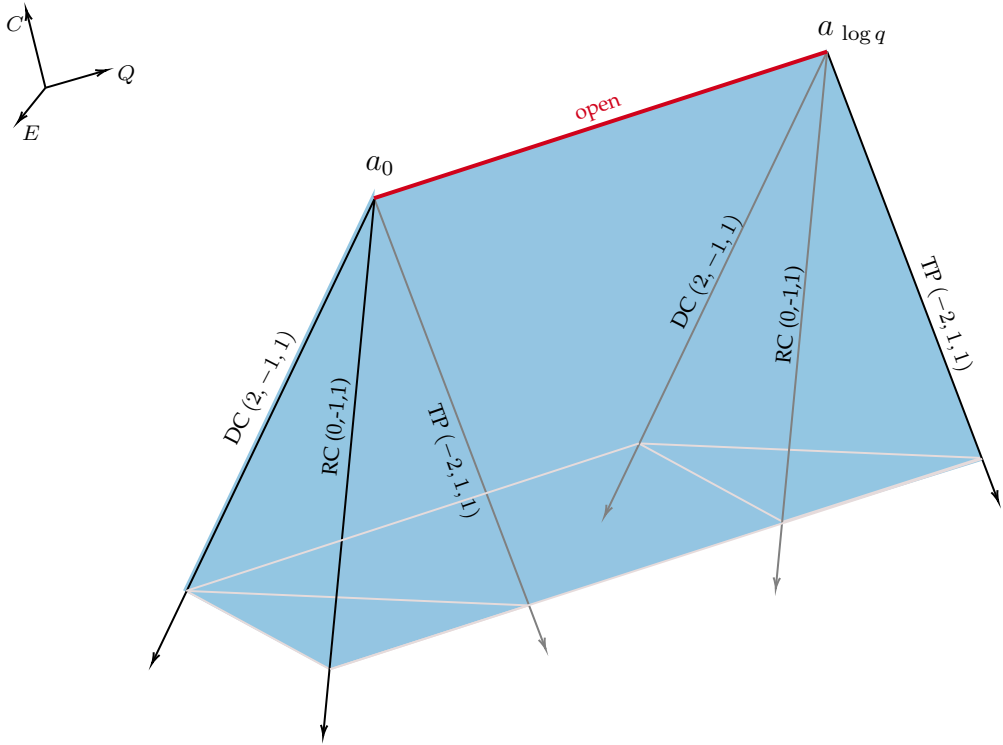
\begin{figure}[ht]
    \tikzset{every picture/.style={line width=0.75pt}} 

\begin{tikzpicture}[x=0.75pt,y=0.75pt,yscale=-.5,xscale=.5]
\scriptsize
\draw    (173.82,152.69) -- (154.98,75.92) ;
\draw [shift={(154.5,73.98)}, rotate = 76.21] [color={rgb, 255:red, 0; green, 0; blue, 0 }  ][line width=0.75]    (10.93,-3.29) .. controls (6.95,-1.4) and (3.31,-0.3) .. (0,0) .. controls (3.31,0.3) and (6.95,1.4) .. (10.93,3.29)   ;
\draw    (173.82,152.69) -- (232.91,135.49) ;
\draw [shift={(234.83,134.93)}, rotate = 163.77] [color={rgb, 255:red, 0; green, 0; blue, 0 }  ][line width=0.75]    (10.93,-3.29) .. controls (6.95,-1.4) and (3.31,-0.3) .. (0,0) .. controls (3.31,0.3) and (6.95,1.4) .. (10.93,3.29)   ;
\draw    (173.82,152.69) -- (147.4,185.73) ;
\draw [shift={(146.15,187.3)}, rotate = 308.64] [color={rgb, 255:red, 0; green, 0; blue, 0 }  ][line width=0.75]    (10.93,-3.29) .. controls (6.95,-1.4) and (3.31,-0.3) .. (0,0) .. controls (3.31,0.3) and (6.95,1.4) .. (10.93,3.29)   ;
\draw  [color={rgb, 255:red, 148; green, 197; blue, 226 }  ,draw opacity=1 ][fill={rgb, 255:red, 148; green, 197; blue, 226 }  ,fill opacity=1 ] (501.84,264.72) -- (952.34,117.39) -- (909.72,589.26) -- (459.23,736.59) -- cycle ;
\draw  [color={rgb, 255:red, 147; green, 198; blue, 226 }  ,draw opacity=1 ][fill={rgb, 255:red, 147; green, 198; blue, 226 }  ,fill opacity=1 ] (501.16,259.53) -- (456.09,736.19) -- (312.51,657.77) -- cycle ;
\draw  [color={rgb, 255:red, 147; green, 198; blue, 226 }  ,draw opacity=1 ][fill={rgb, 255:red, 147; green, 198; blue, 226 }  ,fill opacity=1 ] (953.57,117.49) -- (1107.8,527.15) -- (909.72,589.26) -- cycle ;
\draw [color={rgb, 255:red, 230; green, 219; blue, 219 }  ,draw opacity=1 ]   (314.28,658.81) -- (457.8,737.65) ;
\draw [color={rgb, 255:red, 230; green, 219; blue, 219 }  ,draw opacity=1 ]   (457.8,737.65) -- (656,673.41) ;
\draw [color={rgb, 255:red, 230; green, 219; blue, 219 }  ,draw opacity=1 ]   (314.28,658.81) -- (656,673.41) ;
\draw [color={rgb, 255:red, 230; green, 219; blue, 219 }  ,draw opacity=1 ]   (766.2,510.43) -- (909.72,589.26) ;
\draw [color={rgb, 255:red, 230; green, 219; blue, 219 }  ,draw opacity=1 ]   (909.72,589.26) -- (1107.91,525.03) ;
\draw [color={rgb, 255:red, 230; green, 219; blue, 219 }  ,draw opacity=1 ]   (766.2,510.43) -- (1107.91,525.03) ;
\draw    (502.23,264.66) -- (280.97,730) ;
\draw [shift={(280.11,731.81)}, rotate = 295.43] [color={rgb, 255:red, 0; green, 0; blue, 0 }  ][line width=0.75]    (10.93,-3.29) .. controls (6.95,-1.4) and (3.31,-0.3) .. (0,0) .. controls (3.31,0.3) and (6.95,1.4) .. (10.93,3.29)   ;
\draw [color={rgb, 255:red, 128; green, 128; blue, 128 }  ,draw opacity=1 ]   (502.23,264.66) -- (672.38,715.34) ;
\draw [shift={(673.08,717.21)}, rotate = 249.32] [color={rgb, 255:red, 128; green, 128; blue, 128 }  ,draw opacity=1 ][line width=0.75]    (10.93,-3.29) .. controls (6.95,-1.4) and (3.31,-0.3) .. (0,0) .. controls (3.31,0.3) and (6.95,1.4) .. (10.93,3.29)   ;
\draw    (502.23,264.66) -- (451.16,802.81) ;
\draw [shift={(450.97,804.8)}, rotate = 275.42] [color={rgb, 255:red, 0; green, 0; blue, 0 }  ][line width=0.75]    (10.93,-3.29) .. controls (6.95,-1.4) and (3.31,-0.3) .. (0,0) .. controls (3.31,0.3) and (6.95,1.4) .. (10.93,3.29)   ;
\draw [color={rgb, 255:red, 128; green, 128; blue, 128 }  ,draw opacity=1 ]   (954.14,116.28) -- (732.89,581.62) ;
\draw [shift={(732.03,583.42)}, rotate = 295.43] [color={rgb, 255:red, 128; green, 128; blue, 128 }  ,draw opacity=1 ][line width=0.75]    (10.93,-3.29) .. controls (6.95,-1.4) and (3.31,-0.3) .. (0,0) .. controls (3.31,0.3) and (6.95,1.4) .. (10.93,3.29)   ;
\draw    (954.14,116.28) -- (1124.29,566.95) ;
\draw [shift={(1125,568.82)}, rotate = 249.32] [color={rgb, 255:red, 0; green, 0; blue, 0 }  ][line width=0.75]    (10.93,-3.29) .. controls (6.95,-1.4) and (3.31,-0.3) .. (0,0) .. controls (3.31,0.3) and (6.95,1.4) .. (10.93,3.29)   ;
\draw [color={rgb, 255:red, 128; green, 128; blue, 128 }  ,draw opacity=1 ][fill={rgb, 255:red, 155; green, 155; blue, 155 }  ,fill opacity=1 ]   (954.14,116.28) -- (903.07,654.42) ;
\draw [shift={(902.88,656.41)}, rotate = 275.42] [color={rgb, 255:red, 128; green, 128; blue, 128 }  ,draw opacity=1 ][line width=0.75]    (10.93,-3.29) .. controls (6.95,-1.4) and (3.31,-0.3) .. (0,0) .. controls (3.31,0.3) and (6.95,1.4) .. (10.93,3.29)   ;
\draw [color={rgb, 255:red, 230; green, 219; blue, 219 }  ,draw opacity=1 ]   (457.8,737.65) -- (1109,526) ;
\draw [color={rgb, 255:red, 230; green, 219; blue, 219 }  ,draw opacity=1 ]   (314.28,658.81) -- (766.2,510.43) ;
\draw [color={rgb, 255:red, 208; green, 2; blue, 27 }  ,draw opacity=1 ][line width=1.5]    (501.77,263.75) -- (954.14,116.28) ;

\draw (131.6,77.78) node [anchor=north west][inner sep=0.75pt]   [align=left] {$\displaystyle C$};
\draw (242.54,127.76) node [anchor=north west][inner sep=0.75pt]   [align=left] {$\displaystyle Q$};
\draw (147.46,187.19) node [anchor=north west][inner sep=0.75pt]   [align=left] {$\displaystyle E$};
\draw (490.28,218.55) node [anchor=north west][inner sep=0.75pt]  [font=\large] [align=left] {$\displaystyle a_{0}$};
\draw (941.44,82.69) node [anchor=north west][inner sep=0.75pt]  [font=\large] [align=left] {$\displaystyle a\ _{\log q}$};
\draw (697.75,174.15) node [anchor=north west][inner sep=0.75pt]  [rotate=-340.11] [align=left] {\textcolor[rgb]{0.82,0.01,0.11}{open}};
\draw (346.01,506.99) node [anchor=north west][inner sep=0.75pt]  [rotate=-297.34] [align=left] {DC $\displaystyle ( 2,-1,1)$};
\draw (445.43,544.17) node [anchor=north west][inner sep=0.75pt]  [rotate=-276.16] [align=left] {RC (0,-1,1)};
\draw (571.92,464.48) node [anchor=north west][inner sep=0.75pt]  [rotate=-70.73] [align=left] {TP $\displaystyle ( -2,1,1)$};
\draw (792.01,389.99) node [anchor=north west][inner sep=0.75pt]  [rotate=-297.34] [align=left] {DC $\displaystyle ( 2,-1,1)$};
\draw (898.43,431.17) node [anchor=north west][inner sep=0.75pt]  [rotate=-276.16] [align=left] {RC (0,-1,1)};
\draw (1063.92,316.48) node [anchor=north west][inner sep=0.75pt]  [rotate=-70.73] [align=left] {TP $\displaystyle ( -2,1,1)$};

\end{tikzpicture}
    \caption{Singleton bound region when $\frac{d-1}{n} = \frac{1}{2}$. Similar to Figures~\ref{fig:attainability} and \ref{fig:attainability:2}, the convex hull of TP, DC and RC added to the apex points $a_0$ and $a_{\log q}$ constitute the polytope and can be achieved. The line segment $[a_0, a_{\log q}]$ (in pink) though is currently not known to be achievable. However, unlike the previous two cases, both $a_0 + \text{TP}$ and $a_{\log q}+\text{RC}$ do not contribute to the extremal edges of the polytope in this case, as they are contained in the convex hull of the other four rays.}
\label{fig:attainability:3}
\end{figure}

Next, for any point $\alpha\in\RR^3$ realised by a given EACQ code, we can actually attain the whole set $\alpha + \cS' = \alpha + \text{TP} + \text{RC} + \text{DC}$ (or at least the points in this set corresponding to log-integer coordinates). Namely, note that the one-shot rate triple $(-2,1,1)$ comes from the protocol of quantum teleportation ($2$ cbits and $1$ ebit are consumed to transmit $1$ qubit); the triple $(0,-1,-1)$ represents the resource conversion of qubits into ebits at unit rate; finally, the triple $(2,-1,1)$ comes from the protocol of dense coding ($1$ qubit and $1$ ebit are consumed to transmit $2$ cbits). Thus, by concatenating the given EACQ code with suitable amounts of teleportation, dense coding and resource conversion, we can attain every reasonable rate triple of the form $\alpha + x(-2,1,1) + y(0,-1,-1) + z(2,-1,1)$ with $x,y,z\geq 0$ for large enough $q$. 
Depending on whether $\frac{d-1}{n}$ is greater than or less than $\frac{1}{2}$, we observe that only a particular subset of five from the total six possible combinations of the three protocols TP, RC and RC acting on $a_0$ and $a_{\log q}$ result in extremal rays of the polytope. 
When $\frac{d-1}{n} = \frac{1}{2}$, only four of the six combinations result in extremal rays of the polytope.
Note that throughout we needed the alphabet size $q$ large enough so that we are able to construct EACQ codes attaining $a_0$ and $a_{\log q}$. 

This means that the attainability of the region from Theorem \ref{thm:block-erasure} is reduced to that of the line segment $[a_0,a_{\log q}]$. We have to leave this as an open question, but it is curious that if the line were attained, the realising codes would have to be some kind of interpolation between purely classical MDS codes and fully quantum EAQ codes. As a somewhat separate, but at the same time preparatory question to this, we would like to know what we can deduce about codes attaining the boundary of our region, and in particular necessary conditions for attaining the line segment $[a_0,a_{\log q}]$; cf.~\cite{EAQECC-Singleton}.

Returning to the i.i.d.~channel case $\cE_{q,\d}^{\ox n}$ with asymptotically large block length and asymptotically small error, we note that this is indeed possible. Not only are the points $a_0$ and $a_{\log q}$ attainable by capacity-achieving classical codes and entanglement-assisted quantum capacity, respectively, and in fact for arbitrary alphabet size $q$ (not just large enough), the line segment linking these two points is attained by the time sharing principle, where we subdivide the block of $n$ channel uses into $\lambda n$ and $(1-\lambda)n$ uses on which the MDS and the EAQ code are realised, respectively. This works because in the Shannon-theoretic setting we are happy to make the small error of not correcting non-typical erasure patterns, rather we focus on $\sim \delta\lambda n$ ($\sim \delta(1-\lambda)n$) erasures in the first (second) block, respectively. In the coding-theoretic (zero-error) setting we do not have that luxury, and the code needs to prepare for any distribution of $\delta n$ erasure errors.

\section{Conclusion}
\label{sec:conclusion}
We have shown that one can adapt the information theoretic converse proofs of Hsieh and Wilde \cite{HsiehWilde} and of Wilde \cite{wildebeast} for the triple-tradeoff region of communication over a general channel to the one-shot setting. Applying the obtained converse bounds to the one-shot zero-error case of a block erasure channel, we have derived the Singleton bounds for EACQ codes. By specialising to the hyperplane $C=0$, we recover the region found in \cite{EAQECC-Singleton}, and by specialising further to $C=E=0$, we recover the original quantum Singleton bound for QECC, $Q \leq \max\{0,n-2d+2\}$. 

In \cite{EAQECC-Singleton}, the question of attainability of the whole region for sufficiently large alphabet $q$ had been left open, which boiled down to the line connecting the points EAQ and ``MDS'' in \cite[Fig.~3(c)]{EAQECC-Singleton}; EAQ there means the same as here, as the protocol for entanglement-assisted quantum  capacity has $C=0$, but the ``MDS'' there is really our MDS here concatenated with teleportation. Thus, the question of attainability of the line EAQ-``MDS'' in \cite{EAQECC-Singleton} is lifted to that of the line segment connecting $a_{\log q}$ (EAQ) with $a_0$ (MDS) in the three-dimensional rate region, which we think might be a much clearer question, as it is about interpolating between an essentially classical code and a fully quantum code.

\acknowledgments
The authors thank S. Dedalus for ambiguous advice that led to the proof of a curious theorem, which however ultimately we could not include in the present paper. Furthermore, we thank the anonymous referees of the conference version of this work \cite{EACQ-Singleton-ISIT} of and the present journal version, whose comments helped improve the presentation of the present paper. 
AW is supported by the European Commission QuantERA grant ExTRaQT (Spanish MICINN project PCI2022-132965), by the Spanish MINECO (project PID2019-107609GB-I00) with the support of FEDER funds, by the Spanish MICINN with funding from European Union NextGenerationEU (PRTR-C17.I1) and the Generalitat de Catalunya, and by the Alexander von Humboldt Foundation, as well as the Institute of Advanced Study of the Technical University Munich.

\bibliographystyle{unsrt}


\begin{thebibliography}{10}

\bibitem{EACQ-Singleton-ISIT}
Manideep Mamindlapally and Andreas Winter.
\newblock {Singleton Bounds for Entanglement-Assisted Classical and Quantum
  Error Correcting Codes}.
\newblock In {\em Proceedings of the 2022 IEEE International Symposium on
  Information Theory (ISIT), Aalto, Finland, 26 June-1 July 2022}, pages
  97--102, 2022.

\bibitem{Singleton}
Richard~C. Singleton.
\newblock Maximum distance $q$-nary codes.
\newblock {\em IEEE Transactions on Information Theory}, 10(2):116--118, 1964.

\bibitem{q-Singleton}
Emanuel Knill and Raymond Laflamme.
\newblock A theory of quantum error-correcting codes.
\newblock {\em Physical Review A}, 55(2):900--911, 1997.

\bibitem{q-Singleton-Rains}
Eric~M. Rains.
\newblock {Nonbinary Quantum Codes}.
\newblock {\em IEEE Transactions on Information Theory}, 45(6):1827--1832,
  1999.

\bibitem{KlappeneckerSarvepalli:subsystem-Singleton}
Andreas Klappenecker and Pradeep~K. Sarvepalli.
\newblock {On subsystem codes beating the quantum Hamming or Singleton bound}.
\newblock {\em Proceedings of the Royal Society London A}, 463(2087), 2007.

\bibitem{ea-q-Singleton}
Todd~A. Brun, Igor Devetak, and Min-Hsiu Hsieh.
\newblock Correcting quantum errors with entanglement.
\newblock {\em Science}, 314(5798):436--439, 2006.

\bibitem{catalytic-q-Singleton}
Todd~A. Brun, Igor Devetak, and Min-Hsiu Hsieh.
\newblock {Catalytic Quantum Error Correction}.
\newblock {\em IEEE Transactions on Information Theory}, 60(6):3073--3089,
  2014.

\bibitem{Galindo-et-al}
Carlos Galindo, Fernando Hernando, Ryutaroh Matsumoto, and Diego Ruano.
\newblock Entanglement-assisted quantum error-correcting codes over arbitrary
  finite fields.
\newblock {\em Quantum Information Processing}, 18(4):116, 2019.

\bibitem{LaiBrun:imperfect}
Ching-Yi Lai and Todd~A. Brun.
\newblock Entanglement-assisted quantum error-correcting codes with imperfect
  ebits.
\newblock {\em Physical Review A}, 86(3):032319, 2021.

\bibitem{Nadkarni}
Priya~J. Nadkarni and {Shayan Srinivasa} Garani.
\newblock {Non-binary Entanglement-assisted Stabilizer Codes}.
\newblock {\em Quantum Information Processing}, 20(8):256, 2021.

\bibitem{Allahmadi-et-al}
Adel Allahmadi, A.~AlKenani, R.~Hijazi, Najat Muthana, F.~{\"O}zbudak, and
  Patrick Sol{\'e}.
\newblock New constructions of entanglement-assisted quantum codes.
\newblock {\em Cryptography and Communications}, 14:15--37, 2022.

\bibitem{Chen:ea}
Hao Chen.
\newblock {MDS Entanglement-Assisted Quantum Codes of Arbitrary Lengths and
  Arbitrary Distances}.
\newblock arXiv[quant-ph]:2207.08093, 2022.

\bibitem{LaiAshikhmin}
Ching-Yi Lai and Alexei Ashikhmin.
\newblock {Linear Programming Bounds for Entanglement-Assisted Quantum
  Error-Correcting Codes by Split Weight Enumerators}.
\newblock {\em IEEE Transactions on Information Theory}, 64(1):622--639, 2018.

\bibitem{WildeBrunBabar:EA-turbo}
Mark~M. Wilde, Min-Hsiu Hsieh, and Zunaira Babar.
\newblock Entanglement-assisted quantum turbo codes.
\newblock {\em IEEE Transactions on Information Theory}, 60(2):1203--1222,
  2014.

\bibitem{Grassl-counterex}
Markus Grassl.
\newblock {Entanglement-Assisted Quantum Communication Beating the Quantum
  Singleton Bound}.
\newblock {\em Physical Review A}, 103:020601, 2021.
\newblock (The counterexample had first been reported in a talk of the author
  at AQIS 2016, Taiwan, and has slowly seeped into the literature since then.).

\bibitem{EAQECC-Singleton}
Markus Grassl, Felix Huber, and Andreas Winter.
\newblock Entropic proofs of singleton bounds for quantum error-correcting
  codes.
\newblock {\em IEEE Transactions on Information Theory}, 68(6):3942--3950,
  2022.

\bibitem{GrasslLuZeng:hybrid}
Markus Grassl, Sirui Lu, and Bei Zeng.
\newblock {Codes for Simultaneous Transmission of Quantum and Classical
  Information}.
\newblock In {\em Proceedings of the 2017 International Symposium on
  Information Theory (ISIT), Aachen, Germany}, pages 1718--1722. IEEE, 2017.

\bibitem{NemecKlappi:hybrid}
Andrew Nemec and Andreas Klappenecker.
\newblock {Hybrid Codes}.
\newblock In {\em Proceedings of the 2018 IEEE International Symposium on
  Information Theory (ISIT), Vail (CO)}, pages 796--800. IEEE, 2018.

\bibitem{NemecKlappi:hybrid-paper}
Andrew Nemec and Andreas Klappenecker.
\newblock {Infinite Families of Quantum-Classical Hybrid Codes}.
\newblock {\em IEEE Transactions on Information Theory}, 67(5):2847--2856,
  2021.

\bibitem{Cao-et-al:hybrid}
Ningping Cao, David~W. Kribs, Chi-Kwong Li, Mike~I. Nelson, Yiu-Tung Poon, and
  Bei Zeng.
\newblock Higher rank matricial ranges and hybrid quantum error correction.
\newblock {\em Linear and Multilinear Algebra}, 69(5):827--839, 2021.

\bibitem{NemecKlappi:hybrid-detecting}
Andrew Nemec and Andreas Klappenecker.
\newblock {Nonbinary Error-Detecting Hybrid Codes}.
\newblock {\em American Journal of Science \&{} Engineering}, 1(2):1--4, 2020.
\newblock arXiv[quant-ph]:2002.11075.

\bibitem{KremskyHsiehBrun}
Isaac Kremsky, {Min-Hsiu} Hsieh, and Todd~A. Brun.
\newblock Classical enhancement of quantum-error-correcting codes.
\newblock {\em Physical Review A}, 78(1):012341, 2008.

\bibitem{beny2007generalization}
C{\'e}dric B{\'e}ny, Achim Kempf, and David~W Kribs.
\newblock Generalization of quantum error correction via the heisenberg
  picture.
\newblock {\em Physical Review Letters}, 98(10):100502, 2007.

\bibitem{beny2007PRA}
C{\'e}dric B{\'e}ny, Achim Kempf, and David~W Kribs.
\newblock Quantum error correction of observables.
\newblock {\em Physical Review A}, 76(4):042303, 2007.

\bibitem{Majidy:hybrid}
Shayan Majidy.
\newblock {A unification of the coding theory and OAQEC perspective on hybrid
  codes}.
\newblock arXiv[quant-ph]:1806.03702, 2018.

\bibitem{NemecKlappi:subsystem-hybrid}
Andrew Nemec and Andreas Klappenecker.
\newblock {Encoding classical information in gauge subsystems of quantum
  codes}.
\newblock {\em International Journal of Quantum Information}, 20(2):2150041,
  2022.
\newblock arXiv[quant-ph]:2012.05896.

\bibitem{devetak2005capacity}
Igor Devetak and Peter~W Shor.
\newblock The capacity of a quantum channel for simultaneous transmission of
  classical and quantum information.
\newblock {\em Communications in Mathematical Physics}, 256(2):287--303, 2005.

\bibitem{HsiehWilde}
Min-Hsiu Hsieh and Mark~M. Wilde.
\newblock Entanglement-assisted communication of classical and quantum
  information.
\newblock {\em IEEE Transactions on Information Theory}, 56(9):4682--4704,
  2001.

\bibitem{wildebeast}
Mark~M. Wilde.
\newblock {\em {Quantum Information Theory}, 2nd edition}.
\newblock Cambridge University Press, Cambridge, 2020.
\newblock arXiv[quant-ph]:1106.1445v8.

\bibitem{LiebRuskai:SSA}
Elliott~H. Lieb and {Mary Beth} Ruskai.
\newblock Proof of the strong subadditivity of quantum‐mechanical entropy.
\newblock {\em Journal of Mathematical Physics}, 14(12):1938--1941, 1973.

\bibitem{Fannes}
Mark Fannes.
\newblock A continuity property of the entropy density for spin lattice
  systems.
\newblock {\em Communications in Mathematical Physics}, 31(4):291--294, 1973.

\bibitem{Audenaert}
Koenraad M.~R. Audenaert.
\newblock A sharp continuity estimate for the von {N}eumann entropy.
\newblock {\em Journal of Physics A: Mathematical and Theoretical},
  40(28):8127--8136, 2007.

\bibitem{AW:S-continuity}
Andreas Winter.
\newblock {Tight Uniform Continuity Bounds for Quantum Entropies: Conditional
  Entropy, Relative Entropy Distance and Energy Constraints}.
\newblock {\em Communications in Mathematical Physics}, 347(1):291--313, 2016.

\bibitem{AlickiFannes}
Robert Alicki and Mark Fannes.
\newblock Continuity of quantum conditional information.
\newblock {\em Journal of Physics A: Mathematical and Theoretical},
  37(5):L55--L57, 2004.

\bibitem{Wakakuwa-1}
Eyuri Wakakuwa and Yoshifumi Nakata.
\newblock {Randomized Partial Decoupling Unifies One-Shot Quantum Channel
  Capacities}.
\newblock arXiv[quant-ph]:2004.12593, 2020.

\bibitem{Wakakuwa-2}
Eyuri Wakakuwa, Yoshifumi Nakata, and Min-Hsiu Hsieh.
\newblock {One-Shot Hybrid State Redistribution}.
\newblock {\em Quantum}, 6:724, 2022.

\bibitem{Wakakuwa-3}
Eyuri Wakakuwa, Yoshifumi Nakata, and Hayata Yamasaki.
\newblock {One-shot quantum error correction of classical and quantum
  information}.
\newblock {\em Physical Review A}, 104:012408, 2021.

\bibitem{MacWilliams-Sloane}
F.~Jessie MacWilliams and Neil J.~A. Sloane.
\newblock {\em The Theory of Error-Correcting Codes}, volume~16 of {\em
  Mathematical Library}.
\newblock North Holland Elsevier, 1977.

\bibitem{Tolhuizen}
Ludo M. G.~M. Tolhuizen.
\newblock {On Maximum Distance Separable codes over alphabets of arbitrary
  size}.
\newblock In {\em Proceedings of the 1994 IEEE International Symposium on
  Information Theory (ISIT), Trondheim, Norway}, page 431. IEEE, 1994.

\bibitem{Ball:MDS-conjecture}
Simeon Ball.
\newblock On sets of vectors of a finite vector space in which every subset of
  basis size is a basis.
\newblock {\em Journal of the European Mathematical Society}, 14(3):733--748,
  2012.

\end{thebibliography}


\appendix

\section{Proof of Theorem~\ref{thm:converse}}
\label{app:converse}

This is essentially the converse proof of Wilde in \cite[Ch.~25.4]{wildebeast}, only that we consider a general (non-i.i.d.) channel and use one-shot rates. To get our error-dependent additive constants, we use Lemmas \ref{lemma:Fannes} and \ref{lemma:AF}.
For the sake of self-containedness, we reproduce the argument here in full. Crucially, we use information theoretic deductions to express the relation among $C$, $Q$, $E$ in terms of the channel output $\s$. For that, we identify the channel input $(u, \ph_u^{AA'})$ as the encoded state $(mv, \o_{mv}^{RT_A'AWT_B})$ [cf.~Equation~\eqref{def:O}] of the problem setup. The $\s^{UAB}$ obtained on passing $\ph_u^{AA'}$ through noisy channel $\mathcal{N}$ would correspond to $\barO^{MVRT_A'BWT_B}$ [cf.~Equation~\eqref{def:barO}]. We identify the classical component $U$ (of $\ph_u^{AA'}$) $\equiv MV$ (of $\barO^{MVRT_A'BWT_B}$), the uncorrupted quantum component $A$ ( of $\ph_u^{AA'}$) $\equiv RT_A'WT_B$ (of $\barO^{MVRT_A'BWT_B}$) and the corrupted quantum component $A'$ (of $\ph_u^{AA'}$) $\equiv A$ (of $\barO^{MVRT_A'BWT_B}$). We start by examining the information contained in  $\overline{\Phi}$ and $\Phi$.
\begin{align}
    C_2 + 2Q_2 &\stackrel{}{=} I(M:\Mhat)_{\overline{\Phi}} + I(R:\Xhat)_{\Phi} \label{eq:converse:pA:1}\\
    &= I(MR:\Mhat\Xhat)_{\barG_{\text{ideal}}} \label{eq:converse:pA:2}\\
    &\stackrel{}{\leq} I(MR:\Mhat\Xhat)_{\barG} +  \e' \label{eq:converse:pA:3}\\
    &\stackrel{}{\leq} I(MR:BWT_BV)_{\barO} +  \e' \label{eq:converse:pA:4}\\
    &\stackrel{}{=} I(MR:T_B)_{\barO} + I(MR:BVW|T_B)_{\barO} + \e' \label{eq:converse:pA:5}\\
    &\stackrel{}{\leq} 0 + I(MRT_B:BWV)_{\barO} + \e' \label{eq:converse:pA:6}\\
    &\stackrel{}{\leq} I(MRT_A'T_B:BWV)_{\barO} + \e' \label{eq:converse:pA:7}\\
    &\stackrel{}{=} I(MRT_A'T_B:B)_{\barO} + I(MRT_A'T_B:V|B)_{\barO} +I(MRT_A'T_B:W|VB)_{\barO} +  \e' \label{eq:converse:pA:8}\\
    &\stackrel{}{\leq} I(MVRT_A'T_BW:B)_{\barO} + I(MRT_A'T_BB:V)_{\barO} +I(MRT_A'T_BB:W|V)_{\barO} + \e' \label{eq:converse:pA:9}\\
    &\stackrel{}{\leq} I(UA:B)_{\s} + S(V)_{\s} + 2S(W)_{\s} + \e' \label{eq:converse:pA:10}\\
    &\stackrel{}{=} I(UA:B)_{\s} + C_1 + 2Q_1 + \e'. \label{eq:converse:pA:11}
\end{align}
Here, Equation~\eqref{eq:converse:pA:1} follows by evaluating quantum mutual information of the perfectly correlated classical state $\overline{\Phi}^{M\widehat{M}}$ and the maximally entangled quantum state $\Phi^{R\Xhat}$. In Equation~\eqref{eq:converse:pA:2}, we reduce the right hand side using the fact that $\barG_{\text{ideal}}^{M \Mhat R \Xhat}$ from its definition [cf.~Equation~\eqref{eq:def:barGideal}] is a product state of $\overline{\Phi}^{M\widehat{M}}$ and $\Phi^{R\Xhat}$. The given error $\e$ of this EACQ code, from its definition in Equation~\eqref{eq:distancebound}, corresponds to the upper bound on distance between $\barG^{M \Mhat R \Xhat}$ and $\barG_{\text{ideal}}^{M \Mhat R \Xhat}$. Invoking Lemma \ref{lemma:AF}, we get Inequality~\eqref{eq:converse:pA:3} with $\e' := 2\e (C_2 + Q_2) + g(\e)$.
Inequality \eqref{eq:converse:pA:4} follows from quantum data processing. We get Equation~\eqref{eq:converse:pA:5} using the chain rule for mutual information. In Inequality~\eqref{eq:converse:pA:6}, the first term evaluates to zero due to the independence of the starting states $M$, $R$ and $T_B$ in the problem setup; the second term is relaxed by adding a non-negative term $I(T_B:BWV)_{\bar)}$. Inequality~\eqref{eq:converse:pA:7} is from quantum data processing. Equation~\eqref{eq:converse:pA:8} again comes from the chain rule for mutual information. In Inequality~\eqref{eq:converse:pA:9}, we use the quantum data processing inequality in the first term and add non-negative quantities $I(B:V)_{\barO}$, $I(B:W|V)_{\barO}$ to the second and third terms. Inequality~\eqref{eq:converse:pA:10} follows from identifying $U\equiv MV$, $A \text{ (of $\s$)} \equiv RT_A'WT_B$, $B \text{ (of $\s$)} \equiv B$ and applying information theoretic deductions. We then evaluate the information content of the noiselessly transmitted classical message $V$ and the quantum message $W$.

Next,
\begin{align}
    Q_2 + E_2 &\stackrel{}{=} I(R\rangle\Xhat)_{\Phi} + I(T_A'\rangle T_B')_{\ph} \label{eq:converse:pB:1} \\
    &\stackrel{}{=} I(RT_A'\rangle\Xhat T_B' M \Mhat)_{\barG_{\text{ideal}}} \label{eq:converse:pB:2}\\
    &\stackrel{}{\leq} I(RT_A'\rangle\Xhat T_B' M \Mhat)_{\barG} + \e'' \label{eq:converse:pB:3} \\
    &\stackrel{}{\leq} I(RT_A'\rangle\Xhat T_B' M V \Mhat)_{\barG} + \e'' \label{eq:converse:pB:3a} \\
    &\stackrel{}{\leq} I(RT_A'\rangle B W T_B M V)_{\barO} + \e'' \label{eq:converse:pB:4}\\
    &\stackrel{}{=} I(RT_A' W T_B \rangle B M V)_{\barO} + S(WT_B | B MV)_{\barO} + \e'' \label{eq:converse:pB:5}\\
    &\stackrel{}{=} I(RT_A' W T_B\rangle B M V)_{\barO} + S(T_B | B MV)_{\barO} + S(W|T_B BMV)_{\barO} + \e'' \label{eq:converse:pB:6}\\
    &\stackrel{}{\leq} I(RT_A' W T_B\rangle B M V)_{\barO} + S(T_B)_{\barO} + S(W)_{\barO} + \e''\label{eq:converse:pB:7}\\
    &\stackrel{}{=} I(A\rangle B U)_{\s} + E_1 + Q_1 + \e''. \label{eq:converse:pB:8} 
\end{align}
In Equation~\eqref{eq:converse:pB:1}, we express the coherent information of the pure states $\Ph$ and $\ph$. Equality~\eqref{eq:converse:pB:2} comes from the definition of $\barG_{\text{ideal}}$ [cf.~Equation~\eqref{eq:def:barGideal}], which is a product of the earlier states. In Inequality~\eqref{eq:converse:pB:3} we use a similar technique as in the previous reduction. We employ the error $\e$ of the EACQ code, which is the upper bound of the distance between $\barG^{M \Mhat R \Xhat T_A' T_B'}$ and $\barG_{\text{ideal}}^{M \Mhat R \Xhat T_A' T_B'}$ and then invoke Lemma \ref{lemma:AF}, with $\e'' := 2\e (Q_2 + |T_A'|) + g(\e)$. Inequality\eqref{eq:converse:pB:3a} follows from strong subadditivity. Then we use quantum data processing in Inequality~\eqref{eq:converse:pB:4} and then expand the resulting terms in Eqs.~\eqref{eq:converse:pB:5} and \eqref{eq:converse:pB:6}. We add non-negative terms $I(T_B:BMV)_{\barO}$, $I(W:T_BBMV)_{\barO}$ in Equation~\eqref{eq:converse:pB:7}. We identify the states of $\barO$ as those in $\s$, $U \equiv MV$, $A \text{ (of $\s$)} \equiv RT_A'WT_B$, $B \text{ (of $\s$)} \equiv B$. In the final step, Equation~\eqref{eq:converse:pB:8}, we substitute the amount of information that is contained in $T_B$ and $W$.    

Finally,
\begin{align}
    C_2 + Q_2 + E_2 &\stackrel{}{=} I(M:\Mhat)_{\barG_{\text{ideal}}} + I(RT_A'\rangle \Xhat T_B')_{\barG_{\text{ideal}}} \label{eq:converse:pC:1}\\
    &\stackrel{}{=} I(M:\Xhat T_B' \Mhat)_{\barG_{\text{ideal}}} + I(RT_A'\rangle\Xhat T_B' M \Mhat)_{\barG_{\text{ideal}}} \label{eq:converse:pC:2}\\
    &\stackrel{}{\leq} I(M:\Xhat T_B' \Mhat)_{\barG} + I(RT_A'\rangle\Xhat T_B' M \Mhat)_{\barG} + \e'''
    \label{eq:converse:pC:3}\\
    &\stackrel{}{\leq} I(MV:\Xhat T_B' \Mhat)_{\barG} + I(RT_A'\rangle\Xhat T_B' M V \Mhat)_{\barG} + \e'''
    \label{eq:converse:pC:3a}\\
    &\stackrel{}{\leq} I(MV:B W T_B V)_{\barO} + I(RT_A'\rangle B W T_B M V )_{\barO} + \e'''
    \label{eq:converse:pC:4}\\
    &\stackrel{}{=} I(MV:B W T_B)_{\barO} + I(MV:V|B W T_B)_{\barO} \notag \\
    & \quad\quad + I(RT_A'WT_B\rangle B M V )_{\barO} + S(WT_B | B MV)_{\barO} + \e'''
    \label{eq:converse:pC:5}\\
    &\stackrel{}{\leq} I(MV:B)_{\barO} + I(MV: W T_B|B)_{\barO} + I(MVBWT_B:V)_{\barO} \notag\\
    &\quad\quad + I(RT_A'WT_B\rangle B M V )_{\barO} + S(WT_B | B MV)_{\barO} + \e'''
    \label{eq:converse:pC:6}\\
    &\stackrel{}{=} I(MV:B)_{\barO} + I(RT_A'WT_B\rangle BMV)_{\barO} + S(MV|B)_{\barO} + S(WT_B|B)_{\barO}\notag \\ 
    &\quad\quad - S(WT_B MV|B)_{\barO} + C_1  + S(WT_B | B MV)_{\barO} + \e'''
    \label{eq:converse:pC:7}\\
    &\stackrel{}{\leq} I(MV:B)_{\barO} + I(RT_A'WT_B\rangle BMV)_{\barO} + S(W)_{\barO} + S(T_B)_{\barO} + C_1 + \e''' \label{eq:converse:pC:8} \\
    &\stackrel{}{=} I(U:B)_{\s} + I(A \rangle B U)_{\s} + Q_1 + E_1 + C_1 + \e'''. \label{eq:converse:pC:9}
\end{align}
Here too, we start by expressing the information content of the messages. Equality~\eqref{eq:converse:pC:1} follows from the information content of the correlated classical state $\overline{\Phi}^{M\Mhat}$ and Equation~\eqref{eq:converse:pB:2} of the previous reduction. In Equation~\eqref{eq:converse:pC:2}, we expand the terms further since $\barG_{\text{ideal}}$ is a product state. Like previous reductions, we employ Lemma \ref{lemma:AF} and use the upper bound $\e$ on the trace distance between $\barG^{M \Mhat R \Xhat T_A' T_B'}$ and $\barG_{\text{ideal}}^{M \Mhat R \Xhat T_A' T_B'}$ to get Inequality \eqref{eq:converse:pC:3} with $\e''' := 2\e (C_2 + Q_2 + |T_A'|) + 2g(\e) $.
Inequality \eqref{eq:converse:pC:3a} comes from the data processing inequality of quantum mutual information. Inequality~\eqref{eq:converse:pC:4} follows from the quantum data processing property. In Equation~\eqref{eq:converse:pC:5} we expand both the terms using chain rule. Inequality~\eqref{eq:converse:pC:6} comes from expanding the first term again using information theoretic reductions and adding to the second term a non-negative quantity $I(V:BWT_B)_{\barO}$. In the next step, Equation~\eqref{eq:converse:pC:7}, we rearrange the terms, expand the second term, and evaluate an upper bound on the third term $I(MVBWT_B:V)_{\barO} \leq S(V)_{\barO}\leq C_1$. The third, fifth and seventh terms get cancelled and the fourth term gets relaxed in Inequality~\eqref{eq:converse:pC:8}. Finally, we identify the states of $\barO$ as those in $\s$, $U \equiv MV$, $A \text{ (of $\s$)} \equiv RT_A'WT_B$, $B \text{ (of $\s$)} \equiv B$ and evaluate the information quantities of the remaining terms. This completes the proof.
\qed

\section{Proof of Theorem \ref{thm:iid-erasure-eliminated}}
\label{app:iid-erasure-eliminated}

The proof rests on Fourier-Motzkin elimination of the parameter $t$ in the set of constraints from Theorem \ref{thm:iid-erasure}. We need to distinguish between the cases $\delta<\frac12$, $\delta=\frac12$ and $\delta>\frac12$ as these determine the signs of certain coefficients in the inequalities.

We start by using the Inequalities \eqref{eq:erasure:C+2Q} and \eqref{eq:erasure:C+Q-E} to get
\begin{align}
        \frac{\tdC + 2\tdQ}{1-\d} - \log q &\leq t, \label{eq:iid-erasure-eliminated:1}\\
        \frac{(1-\d) \log q}{\d} - \frac{\tdC + \tdQ - \tdE}{\d} & \geq t. \label{eq:iid-erasure-eliminated:2}
        \intertext{From the range of $t$ we also have}
        t &\geq 0, \label{eq:iid-erasure-eliminated:3}\\
        t &\leq \log q. \label{eq:iid-erasure-eliminated:4}
\end{align}
Combining Inequalities \eqref{eq:iid-erasure-eliminated:1} and \eqref{eq:iid-erasure-eliminated:4}, we get \eqref{eq:iid-erasure-eliminated:thm:1}. 
Combining \eqref{eq:iid-erasure-eliminated:2} and \eqref{eq:iid-erasure-eliminated:3} gives \eqref{eq:iid-erasure-eliminated:thm:3}; and \eqref{eq:iid-erasure-eliminated:1} and \eqref{eq:iid-erasure-eliminated:2} imply \eqref{eq:iid-erasure-eliminated:thm:4}.
For the remaining inequalities we analyze over two cases when $\d < \frac{1}{2}$ or $\d \geq \frac{1}{2}$. In the first case, Inequality~\eqref{eq:erasure:Q-E} reduces to
\begin{align}
        \frac{\tdQ - \tdE}{1-2\d} &\leq t. \label{eq:iid-erasure-eliminated:5}
\end{align}
Combining Inequalities \eqref{eq:iid-erasure-eliminated:2} and \eqref{eq:iid-erasure-eliminated:5} proves the second part of Inequality \eqref{eq:iid-erasure-eliminated:thm:2}. Combining \eqref{eq:iid-erasure-eliminated:4} and \eqref{eq:iid-erasure-eliminated:5} proves the first part of Inequality \eqref{eq:iid-erasure-eliminated:thm:5}. In the case when $\d > \frac{1}{2}$, Inequality \eqref{eq:erasure:Q-E} reduces to
\begin{align}
        \frac{\tdQ - \tdE}{2\d - 1} &\geq t. \label{eq:iid-erasure-eliminated:6}
\end{align}
Combining Inequalities \eqref{eq:iid-erasure-eliminated:1} and \eqref{eq:iid-erasure-eliminated:6} proves the first part of Inequality \eqref{eq:iid-erasure-eliminated:thm:2}. Combining \eqref{eq:iid-erasure-eliminated:3} and \eqref{eq:iid-erasure-eliminated:6} proves the second part of Inequality \eqref{eq:iid-erasure-eliminated:thm:5}. For the case when $\d = \frac{1}{2}$, Inequality \eqref{eq:erasure:Q-E} becomes simply
\begin{align}
        \tdQ - \tdE & \leq 0.
\end{align}
This is exactly what both the parts of Inequality \eqref{eq:iid-erasure-eliminated:thm:5} and the first part of Inequality \eqref{eq:iid-erasure-eliminated:thm:2} reduce to, as well. This proves all the inequalities in Theorem~\ref{thm:iid-erasure-eliminated}. To see why these inequalities fully characterize the capacity region, one can check that all other inequality combinations from \eqref{eq:iid-erasure-eliminated:1} to \eqref{eq:iid-erasure-eliminated:6} only result in bounds that are implied by those we have included above. We omit this straightforward confirmation. 
\qed

\end{document}